\newtheorem{theorem}{Theorem}[section]
\newtheorem{proposition}{Proposition}[theorem]
\theoremstyle{remark}
\theoremstyle{remark}
\newcommand{\false}{\mathsf{False}}
\newcommand{\Ss}{\mathcal{S}}
\newcommand{\N}{\mathbb N}
\newcommand{\wB}{\Box^{\mathsf{ns}}}
\newcommand{\wU}{\until^{\mathsf{ns}}}
\newcommand{\wF}{\fut^{\mathsf{ns}}}
\newcommand{\wS}{\since^{\mathsf{ns}}}
\newcommand{\until}{\:\mathsf{U}}
\newcommand{\since}{\:\mathsf{S}}
\newcommand{\R}{\:\mathbb{R}}
\newcommand{\fut}{\Diamond}
\newcommand{\past}{\mbox{$\Diamond\hspace{-0.29cm}-$}}
\mathchardef\mhyphen="2D
\mathchardef\mhyph="2D
\newcommand{\nex}{\mathsf{O}}
\newcommand{\nx}{\mathsf{O}}\newcommand{\prev}{\mathsf{\bar{O}}}
\newcommand{\sbf}{\Box}
\newcommand{\sbm}{\boxminus}
\newcommand{\tap}{\mathsf{T}}
\newcommand{\mtlpwuisi}{\mathsf{MTL}[\until_I, \since_I]}
\newcommand{\mtl}{\mathsf{MTL}}
\newcommand{\tptl}{\mathsf{TPTL}}
\newcommand{\mtlpwuisnp}{\mathsf{MTL}[\until_I,\since_{np}]}
\newcommand{\mtlpwunpsi}{\mathsf{MTL}[\until_{np},\since_I]}
\newcommand{\intintervaln}{\mathcal{I}_\mathsf{nat}}
\newcommand{\intinterval}{\mathcal{I}_\mathsf{int}}
\newcommand{\ltl}{\mathsf{LTL}}
\newcommand{\mitl}{\mathsf{MITL}}
\newcommand{\oomit}[1]{}
\newcommand{\optptl}{\mbox{$\mathsf{OpTPTL}^1$}}
\tikzstyle{block} = [rectangle, draw = white, fill=white!20, 
\tikzstyle{bigblock} = [rectangle, draw = white, fill=white!20, 
\tikzstyle{line} = [draw, -latex']
\tikzstyle{cloud} = [rectangle, draw = white, fill=white!20, 
\newcommand{\regm}{\mathsf{Rat}}
\newcommand{\fregm}{\mathsf{FRat}}
\newcommand{\freg}{\mathsf{FRat}}
\newcommand{\pregm}{\mathsf{PRat}}
\newcommand{\natptl}{\mathsf{NA\text{-}1\text{-}TPTL}}
\newcommand{\pnatptl}{\mathsf{NA^+\text{-}1\text{-}TPTL}}
\newcommand{\nnatptl}{\mathsf{NA^-\text{-}1\text{-}TPTL}}
\newcommand{\patptl}{\mathsf{PA\text{-}1\text{-}TPTL}}
\newcommand{\regmtl}{\mathsf{RatMTL}}
\newcommand{\re}{\mathsf{re}}
\newcommand{\emitl}{\mathsf{EMITL}}
\newcommand{\pnregmtl}{\mathsf{PnEMTL}}
\newcommand{\fregmk}{\mathcal{F}}
\newcommand{\pregmk}{\mathcal{P}}
\newcommand{\pnaem}{\pnregmtl[\fregmk_{na}, \pregmk_I]}
\newcommand{\nnaem}{\pnregmtl[\fregmk_I, \pregmk_{na}]}
\newcommand{\cal}[1]{\mathcal{#1}}
\newcommand{\iecm}{\mathsf{IECM}}
\newcommand{\Cc}{\mathsf{C}}
\newcommand{\Pp}{\mathsf{P}}
\tikzstyle{RectObject}=[rectangle,fill=white,draw,line width=0.5mm]
\tikzstyle{line}=[draw]
\tikzstyle{arrow}=[draw, -latex] 
\newcommand{\true}{\mathsf{True}}
\keywords{Real-Time Logics, Metric Temporal Logic, Timed Propositional Temporal Logic, Satisfiability checking, Decidability, Non-Punctuality, Non-Adjacency, Expressiveness}
\theoremstyle{plain} 
\begin{document}

\title{Openness and Partial Adjacency in One Variable TPTL.}

\author[S.N.~Krishna]{Shankara Narayan Krishna}[a]
\author[K.~Madnani]{Khushraj Madnani}[b]
\author[A.Nag]{Agnipratim Nag}[a]
\author[P.K.Pandya]{Paritosh Pandya}[a]

\address{Indian Institute of Technology Bombay, India}	
\email{krishnas@cse.iitb.ac.in, agnipratim.nag@iitb.ac.in, pandya@tifr.res.in}  

\address{Max Planck Institute for Software Systems, Germany}	
\email{kmadnani@mpi-sws.org}  





\begin{abstract}
Metric Temporal Logic (MTL) and Timed Propositional Temporal Logic (TPTL) are prominent real-time extensions of Linear Temporal Logic (LTL). MTL extends LTL modalities, Until, $\until$ and Since, $\since$ to family of modalities $\until_I$ and $\since_I$, respectively, where $I$ is an interval of the form $\langle l,u \rangle$ to express real-time constraints. On the contrary, TPTL extends LTL by real-valued freeze quantification, and constraints over those freeze variables to do the same. It is well known that one variable fragment of TPTL is strictly more expressive than MTL. In general, the satisfiability checking problem for both MTL and TPTL is undecidable. 
MTL enjoys the benefits of relaxing punctuality. That is, satisfiability checking for Metric Interval Temporal Logic (MITL), a subclass of MTL where the intervals are restricted to be of the form $\langle l, u \rangle$ where $l<u$, is decidable with elementary complexity (EXPSPACE complete). Moreover, Partially Punctual Metric Temporal Logic (PMTL), a subclass of MTL where punctual intervals are only allowed in either $\until$ modalities or $\since$ modalities, but not both, is also decidable over finite timed words with non-primitive recursive complexity.

In case of TPTL, punctuality can be trivially recovered due to freeze quantifiers and boolean over guards. Hence, we study a more restrictive version of non-punctuality, called Openness. Intuitively, this restriction only allows a property to be specified within timing intervals which are topologically open. We show that even with this restriction, 1-TPTL is undecidable. Our results make a case for a the new refined notion of non-adjacency by Krishna et. al. for getting a decidable fragment of 1-TPTL, called non-adjacency. We extend the notion of non-adjacency to partial adjacency, where the restriction is only applicable in either past or future but not in both directions. We show that partially adjacent 1- TPTL (PA-1-TPTL) is decidable over finite timed words. Moreover, it is strictly more expressive than PMTL, making it the most expressive boolean closed decidable timed logic known in the literature.

\end{abstract}

\maketitle

\section{Introduction}
Metric Temporal Logic $\mtlpwuisi$ is a well established logic useful for specifying quantitative properties of real-time systems. The main modalities of  $\mathsf{MTL}$ are $\until_I$ (read ``until $I$'') and 
$\since_I$ (read ``since $I$''), where $I$ is a time interval with endpoints in $\mathbb{N}$. These 
formulae are interpreted over timed behaviours or timed words.  For example, a formula $a \until_{[2,3]} b$ is satisfied by a position $i$ of a timed word $\rho$ if and only if there is a position $j$ 
strictly in the future of $i$ where $b$ is true, and 
at all intermediate positions between $i$ and $j$, $a$ is true; moreover, the difference 
in the timestamps of $i$ and $j$ must lie in the interval $[2,3]$.   Similarly, $a \since_{[2,3]} b$ is true at a point $i$ if and only if there is a position $j$ strictly 
in the past of $i$ where $b$ is true, and at all intermediate positions between $i$ and $j$, $a$ is true; 
further, the difference in the timestamps between $i$ and $j$ lie in the interval $[2,3]$. 

 
 In their seminal paper, Alur and Henzinger \cite{AH93} showed that the satisfiability of full  $\mathsf{MTL}$, with until and since modalities
 is  undecidable even over finite words. This ability to encode undecidable problems is due to the presence of punctual intervals, i.e., intervals of the form $[x,x]$. This allows the logic to specify constraints like ``an event $a$ occurs exactly after 5 time units, $\top \until_{[5,5]} a$.'' In practice, such exact constraints are not used extensively. Hence, Alur et al. studied the non-punctual fragment of MTL called Metric Interval Temporal Logic (MITL) in  \cite{AlurFH91} \cite {AFH96} where  the time intervals used in the until, since modalities are non-punctual, i.e. of the form $\langle x, y\rangle$ where $x < y$. They show that the satisfiability becomes decidable over finite as well as infinite timed words with EXPSPACE complexity.

 The satisfiability of the future only fragment of $\mathsf{MTL}$, where since modalities are not used (MTL[$\until_I$]), was open for a long time. Ouaknine and Worrell \cite{Ouaknine05} showed its decidability via a reduction to 1-clock Alternating Timed Automata over finite timed words.  
  A natural extension to both these problems studied in \cite{AFH96}\cite{AlurFH91} \cite{Ouaknine05} is to ask what happens to the decidability and expressiveness of $\mtlpwuisnp$, subclass of MTL where $\since$ modalities are non-punctual, when interpreted over finite timed words. This was resolved by Krishna et. al. in \cite{time14}.

Timed Propositional Temporal Logic (TPTL) extends LTL with freeze quantifiers. A freeze quantifier \cite{AlurH92}\cite{AH94} has the form $x.\varphi$
 with freeze variable $x$ (also called a clock \cite{patricia}\cite{pandyasimoni}).
When it is evaluated at a point $i$ on a timed word, the time stamp of $i$ (say $\tau_i$) is frozen or registered in $x$, and the formula $\varphi$ is evaluated using this value for $x$. Variable $x$ is used in $\varphi$ in a constraint of the form $T-x \in I$; this constraint, when evaluated at a point $j$, checks if $\tau_j -\tau_i \in I$, where $\tau_j$ is the time stamp at point $j$. Here $T$ can be seen as a special variable giving the timestamp of the present point. For example, the formula  $\varphi = \fut x.(a\wedge \fut (b \wedge T-x \in [1,2] \wedge \fut (c \wedge T-x \in [1,2])))$ asserts that there is a point $i$ in the future where $a$ holds and in its future there is a $b$ within interval $[1,2]$ followed by a $c$ within interval $[1,2]$ from $i$.

The contributions of this paper are two fold:

\begin{enumerate}
\item We study the satisfiability checking problem for a restricted fragment of the TPTL that can only specify properties within topologically open timing intervals. We call this fragment as Open TPTL (denoted by $\optptl$). Notice that the restriction of openness is more restrictive than that of non-punctuality, and with openness punctual guards can not be simulated even with the use of freeze quantifiers and boolean operators. In spite of such a restriction, we show that satisfiability checking of $\optptl$ is as hard as that of 1-$\tptl$ (i.e. undecidable on infinite words, and decidable with non-primitive recursive lower bound for finite words). This implies that it is something more subtle than just the presence of punctual guards that makes the satisfiability checking problem hard for 1-$\tptl$. This makes a strong case for studying non-adjacency of $\tptl$.

\item We define the notion of partial adjacency, generalizing the notion of non-adjacency. Here, we allow adjacent guards but only in one direction. We show that Partially Punctual One variable Timed Propositional Temporal Logic  ($\patptl$) is decidable over finite timed words. Moreover, this logic is the most expressive known decidable subclass of the logic known till date.

\end{enumerate}

\section{Preliminaries}
\label{sec:prelim}
Let $\Sigma$ be a finite set of propositions, and let  $\Gamma = 2^{\Sigma} \setminus \emptyset$ \footnote{ We exclude this empty-set for technical reasons. This simplifies definitions related to equisatisfiable modulo oversampled projections \cite{khushraj-thesis}. Note that this doesn't affect the expressiveness of the models as one can add a special symbol denoting the empty-set.}. 
A finite word over $\Sigma$ is a finite sequence $\sigma = \sigma_1 \sigma_2 \ldots \sigma_n$, where $\sigma_i \in \Gamma$. A finite timed word $\rho$ over $\Sigma$ is a non-empty finite sequence $\rho = (\sigma_1, \tau_1) \ldots (\sigma_n, \tau_n)$ of pairs $(\sigma_i, \tau_i) \in (\Gamma \times \R_{\geq 0})$;  
where $\tau_1=0$ and $\tau_i \leq \tau_j$ for all $1 \leq i \leq j \leq n$ and $n$ is the length of $\rho$ (also denoted by $|\rho|$).  The $\tau_i$ are called time stamps. 
For a timed or untimed word $\rho$, let $dom(\rho) = \{i | 1 \le i \le |\rho|\}$, and 
$\sigma[i]$ denotes the symbol at position $i \in dom(\rho)$.
The set of timed words over $\Sigma$ is denoted $T\Sigma^*$.
Given a (timed) word $\rho$ and $i \in dom(\rho)$, a pointed (timed) word is the pair $\rho, i$. 
Let $\intinterval$ ($\intintervaln$)  be the set of open, half-open or closed time intervals, such that the end points of these intervals are in $\mathbb{Z}\cup \{-\infty,\infty\}$ ($\mathbb{N} \cup \{0,\infty\}$, respectively).

In literature, both finite and infinite words and timed words are studied. As we restrict our study to finite timed words, we call finite timed words and finite words as simply timed words and words, respectively.

\subsection{Linear Temporal Logic}
\label{ltl}
Formulae of $\ltl$ are built over a finite set of propositions $\Sigma$ using Boolean connectives and temporal modalities ($\until$ and $\since$) as follows: \\$\varphi::=a~|\top~|~\varphi \wedge \varphi~|~\neg \varphi~|
~\varphi \until \varphi ~|~ \varphi \since \varphi$,\\ where  $a \in \Sigma$. 
The satisfaction of an $\ltl$ formula is evaluated over pointed words. For a word $\sigma = \sigma_1 \sigma_2 \ldots \sigma_n \in \Sigma^*$ and a point $i \in dom(\sigma)$, the satisfaction of an $\ltl$ formula $\varphi$ at point $i$ in $\sigma$ is defined, recursively, as follows:
\\(i) $\sigma, i \models a$  iff  $a \in \sigma_{i}$, (ii) $\sigma,i  \models \top$ iff  $i \in dom(\rho)$, (iii) $\sigma,i  \models \neg \varphi$ iff  $\sigma,i \nvDash  \varphi$
\\(iv) $\rho,i \models \varphi_{1} \wedge \varphi_{2}$  iff  
$\sigma,i \models \varphi_{1}$ 
and $\sigma,i\ \models\ \varphi_{2}$, (v) $\rho,i \models \varphi_{1} \vee \varphi_{2}$  iff  
$\sigma,i \models \varphi_{1}$ 
or $\sigma,i\ \models\ \varphi_{2}$,
\\(vi) $\sigma,i\ \models\ \varphi_{1} \until \varphi_{2}$  iff  $\exists j > i$, 
$\sigma,j\ \models\ \varphi_{2}$, and  $\sigma,k\ \models\ \varphi_{1}$ $\forall$ $i< k <j$,\\
(vii) $\sigma,i\ \models\ \varphi_{1} \since \varphi_{2}$  iff  $\exists j < i$, 
$\sigma,j\ \models\ \varphi_{2}$, and  $\sigma,k\ \models\ \varphi_{1}$ $\forall$ $j< k <i$.

Derived operators can be defined as follows: $\fut \varphi = \top  \until \varphi$,  and $\sbf \varphi = \neg \fut \neg \varphi$. Symmetrically,  $\past \varphi = \top \since \varphi$,  and $\sbm \varphi = \neg \past \neg \varphi$. 
An LTL formula is said to be in negation normal form if it is constructed out of basic and derived operators above, but
where negation appears only in front of propositional letters. It is well known that every LTL formula can be converted to an equivalent formula which is in negation normal form.

\subsection{Metric Temporal Logic}
$\mathsf{MTL}$ \cite{koymans} is a real time extension of $\mathsf{LTL}$, where timing constraints can be given in terms of intervals along with the modalities.
\subsubsection{Syntax}

Given $\Sigma$,  the formulae of $\mathsf{MTL}$ are built from atomic propositions in $\Sigma$  using boolean connectives and 
time constrained versions of the modalities $\until$ and  $\since$ as follows: 

$$\varphi::=\mathsf{atom} ~|\true~|\varphi \wedge \varphi~|~\neg \varphi~|
~\varphi \until_I \varphi~|~\varphi \since_I \varphi$$

where $\mathsf{atom} \in \Sigma$ and $\Sigma$ is a set of atomic propositions,  $I$ is an open, half-open or closed interval with endpoints in $\N \cup \{0, \infty\}$. 

Formulae of $\mathsf{MTL}$ are interpreted on timed words over a chosen set of propositions. 
Let $\varphi$ be an $\mathsf{MTL}$ formula. Symbolic timed words are more natural models for logics as compared to letter timed words. Thus for a given set of propositions $\Sigma$, when we say that $\varphi$ is  interpreted over $\Sigma$ we mean that $\varphi$ is evaluated over timed words in $\tap \Sigma^*$ unless specifically mentioned otherwise.

\subsubsection{Semantics}
Given a timed  word $\rho$, and an $\mathsf{MTL}$ formula $\varphi$, in the pointwise semantics, the temporal connectives of $\varphi$ quantify over a finite set of positions in $\rho$.  For a timed word $\rho=(\sigma, \tau) \in \tap\Sigma^*$, a position 
$i \in dom(\rho)$, and an $\mathsf{MTL}$ formula $\varphi$, the satisfaction of $\varphi$ at a position $i$ 
of $\rho$ is denoted $(\rho, i) \models \varphi$, and is defined as follows:

\fbox{
\parbox{0.9\linewidth}{
\noindent 
$\rho,i \models \true$ $\leftrightarrow$ $i \in dom(\rho)$\\
$\rho, i \models a$  $\leftrightarrow$  $a \in \sigma_{i}$\\
$\rho,i  \models \neg \varphi$ $\leftrightarrow$  $\rho,i \nvDash  \varphi$\\
$\rho,i \models \varphi_{1} \wedge \varphi_{2}$   $\leftrightarrow$ 
$\rho,i \models \varphi_{1}$ 
and $\rho,i\ \models\ \varphi_{2}$\\
$\rho,i\ \models\ \varphi_{1} \until_{I} \varphi_{2}$  $\leftrightarrow$  $\exists j > i$, 
$\rho,j\ \models\ \varphi_{2}, t_{j} - t_{i} \in I$, and  $\rho,k\ \models\ \varphi_{1}$ $\forall$ $i< k <j$\\
$\rho,i\ \models\ \varphi_{1} \since_{I} \varphi_{2}$  $\leftrightarrow$  $\exists\ j < i$,  
 $\rho,j\ \models\ \varphi_{2}$, $t_{i} - t_{j} \in I$,  
 and  $\rho,k\ \models\ \varphi_{1}$ $\forall$  $j<k < i$
}}

\noindent $\rho$ satisfies $\varphi$ denoted $\rho \models \varphi$ 
if and only if $\rho,1 \models \varphi$. Let $\varphi$ be interpreted over $\tap \Sigma^*$. Then $L(\varphi)=\{\rho \mid \rho, 1 \models \varphi \wedge \rho \in \tap \Sigma^*\}$.
Two formulae $\varphi$ and $\phi$ are said to be language equivalent denoted as $\varphi \equiv \phi$ if and only if $L(\varphi) = L(\phi)$.
Additional temporal connectives are defined in the standard way: 
we have the constrained future and past eventuality operators $\fut_I a \equiv \true \until_I a$ and 
$\past_I a \equiv \true \since_I a$, and their dual  
$\Box_I a \equiv \neg \fut_I \neg a$, 
$\boxminus_I a \equiv \neg \past_I \neg a$. We can also define next operator and previous operators as $\nex_I \phi \equiv \false \until_I  \phi$ and $\prev_I \phi \equiv \false \since_I \phi$, respectively.
Non-strict versions of  operators 
are defined as  $\wF a=a \vee \fut a, 
\wB a\equiv a \wedge \Box a$, $a \wU_I b\equiv b \vee [a \wedge (a \until_I b)]$, $a \wS_I b \equiv b \vee [a \wedge (a \since_I b)]$.
We denote by $\mathsf{MTL}[\until_I,\since_I]$, the class of $\mathsf{MTL}$ formulae with until and since modalities.

\subsection{Natural Subclasses of $\mathsf{MTL}$}
\subsubsection{Metric Interval Temporal Logic ($\mathsf{MITL}$)}
A non-punctual time interval has the form  $\langle a, b \rangle$ with $a \neq b$. 
$\mathsf{MITL}$  is a subclass of $\mathsf{MTL}$ in which all the intervals in the $\until_I, \since_I$ modalities are non-punctual.
We denote non-punctual until, since modalities as $\until_{np}$ and 
$\since_{np}$ respectively, where $np$ stands for {\it non-punctual}.  The syntax of $\mathsf{MITL}$ is as follows:
 
 $$\varphi::=\mathsf{atom} ~|\true~|\varphi \wedge \varphi~|~\neg \varphi~|
 ~\varphi \until_{np} \varphi~|~\varphi \since_{np} \varphi$$
 \begin{theorem}[ \cite{AFH96} ]
	Satisfiability checking for $\mathsf{MITL}$ is decidable over both finite and infinite timed words with EXPSPACE-complete complexity.
\end{theorem}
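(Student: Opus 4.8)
The plan is to prove the two halves of the statement separately: an EXPSPACE \emph{membership} result obtained by compiling $\mitl$ into timed automata, and a matching EXPSPACE \emph{hardness} result obtained by a reduction from a canonical EXPSPACE-complete problem. For membership I would fix an $\mitl$ formula $\varphi$, push negations down to the propositional atoms so that only the monotone fragment built from $\until_{np}$, $\since_{np}$ and their duals remains, and then form the closure $\mathit{cl}(\varphi)$ of all subformulas. The goal is to build a nondeterministic timed automaton $\Aa_\varphi$ with $L(\Aa_\varphi)=L(\varphi)$. Its discrete control would be the usual $\ltl$-style tableau over \emph{atoms}, i.e.\ maximal locally consistent subsets of $\mathit{cl}(\varphi)$; this already gives a control-state space of size $2^{O(|\varphi|)}$. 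Each timed subformula $\psi_1 \until_{\langle a,b\rangle} \psi_2$, and symmetrically each $\since_{np}$ subformula, then has to be certified by clocks that track, as time advances, the sliding window $\langle T-b, T-a\rangle$ of admissible witness times.

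The crux --- and the place where non-punctuality is indispensable --- is to bound the number of clocks needed. First I would show that when $a<b$ the pending \emph{obligations} generated by an until-subformula can be coalesced: two obligations whose target windows are closer together than the window width $b-a$ need not be distinguished, because any witness discharging the later one already lies inside the window of the earlier one. This yields a lemma stating that at every instant only a constant number $K_{\langle a,b\rangle}$ of distinct obligation-groups can be simultaneously active, each group being represented by a pair of clocks recording its oldest and its youngest deadline. Consequently $\Aa_\varphi$ needs only $O(|\varphi|)$ clocks, with maximal constant bounded by the largest endpoint occurring in $\varphi$. By contrast, for a punctual interval $[a,a]$ the obligations cannot be coalesced and this bound fails, which is exactly why punctuality breaks decidability.

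Given $\Aa_\varphi$, membership follows from the standard region construction: the region automaton of a timed automaton with $c$ clocks and maximal constant $M$ has roughly $2^{O(c)} \cdot O(M)^{c} \cdot c!$ regions, which for $\Aa_\varphi$ is of size exponential in $|\varphi|$ once the binary-encoded constants are expanded. Since $L(\varphi)\neq\emptyset$ iff an accepting region is reachable, and reachability is solvable in space polynomial in the region graph, the whole procedure runs in space exponential in $|\varphi|$, i.e.\ in EXPSPACE; the same argument applies over infinite words by equipping the region automaton with a B\"uchi acceptance condition.

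For the lower bound I would reduce from the acceptance problem of a deterministic Turing machine running in space $2^n$. A timed word is used to spell out the successive configurations of the machine, and the succinctness of \emph{binary-encoded} interval endpoints lets a single $\mitl$ formula address tape positions that are exponentially far apart while only comparing timestamps within a bounded-width window; non-punctual intervals of width one suffice to force cell-by-cell consistency between consecutive configurations. I expect the main obstacle of the entire proof to be the clock-bounding lemma of the second paragraph: making the informal ``coalescing of obligations'' fully rigorous --- choosing the grouping, proving it sound and complete for the semantics of $\until_{np}$, and extracting the exact constant $K_{\langle a,b\rangle}$ --- is the technical heart on which both decidability and the elementary complexity bound rest.
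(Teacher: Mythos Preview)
Your sketch is essentially the classical Alur--Feder--Henzinger construction: tableau atoms plus a bounded pool of clocks per temporal subformula, with the non-punctuality of the intervals used to coalesce overlapping obligations and keep the clock count linear in $|\varphi|$; the lower bound by encoding exponential-space Turing machines via binary-encoded interval constants is likewise the standard argument. So the approach is correct.

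However, note that the present paper does \emph{not} prove this theorem at all: it is quoted verbatim from \cite{AFH96} as background and is simply cited without proof. There is therefore no ``paper's own proof'' to compare against; your proposal is a faithful outline of the original AFH96 argument that the citation points to.
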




\subsubsection{The Until Only Fragment of $\mathsf{MTL}$ ($\mathsf{MTL}[\until_I]$)} 
As the name suggests, this subclass of $\mathsf{MTL}$ allows only future operators. The syntax of the future only fragment, denoted by $\mathsf{MTL}[\until_I]$ is defined as
$$\varphi::=\mathsf{atom} ~|\true~|\varphi \wedge \varphi~|~\neg \varphi~|
 ~\varphi \until_I \varphi$$
\begin{theorem}[\cite{Ouaknine05} \cite{OuaknineW06}]
	Satisfiability checking of $\mathsf{MTL}[\until_I]$ is decidable over finite timed words with non primitive recursive complexity and undecidable over infinite timed words. 
\end{theorem}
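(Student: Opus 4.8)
The plan is to follow the route of Ouaknine and Worrell: pass from the logic to one-variable alternating timed automata, and then analyse the emptiness problem of those automata through the theory of well-structured transition systems.

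First I would give an effective, satisfiability-preserving translation from every $\mtlpwui$ formula $\varphi$ into a one-clock alternating timed automaton $\aut_\varphi$. The translation proceeds by structural induction on $\varphi$: atomic propositions and Boolean connectives are handled directly, with conjunction and disjunction realized by the universal and existential branching modes of alternation, while each constrained until $\psi_1 \until_I \psi_2$ is encoded by a location carrying a self-loop that, on reading each new letter, branches universally into an obligation that $\psi_1$ continue to hold and an obligation to either discharge $\psi_2$ now (subject to the single clock lying in $I$) or to keep waiting. The key point is that a single clock suffices, because the only metric information any $\mtlpwui$ subformula requires is the time elapsed since the position at which that subformula was asserted; the clock is reset when the until is spawned and tested against the integer endpoints of $I$. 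This reduces satisfiability of $\varphi$ to nonemptiness of $\aut_\varphi$.

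Second, for finite timed words I would establish decidability of one-clock alternating-automaton emptiness. A configuration along a run is a finite set of location/clock-value pairs. I would abstract each configuration by recording, for every active copy, its location together with the integer part of its clock value truncated at the maximal constant $c_{\max}$ occurring in $\aut_\varphi$, plus the relative order of the fractional parts across all active clocks. On these abstract configurations I would define a quasi-order and argue, via Higman's and Dickson's lemmas, that it is a well-quasi-order, and that the backward timed and discrete transition relations are monotone with respect to it. Nonemptiness then becomes a coverability question on a well-structured transition system, decidable by the standard backward-reachability saturation algorithm; its termination is guaranteed by the well-quasi-order. This yields the decidability claim, and because the termination bound is governed by the length of bad sequences for the order it is only Ackermannian, accounting for the non-primitive-recursive upper bound.

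Third, for the matching lower bound I would reduce from the reachability problem of lossy (insertion) channel machines, which is itself non-primitive recursive: the channel contents would be encoded in the timestamps of the timed word, and the alternation of $\aut_\varphi$ — equivalently, the nesting of until modalities in $\mtlpwui$ — would enforce the faithful copying of the channel between successive configurations, with the lossy/insertion faults matching exactly what the logic can and cannot verify. For infinite timed words I would instead observe that the problem becomes undecidable, reducing from the recurrent-state problem for channel machines, exploiting the fact that over infinite words one can demand infinitely many faithful channel copies, a condition that cannot be detected by a finite-word acceptance criterion. The main obstacle is the second step: defining the abstraction so that the induced order is genuinely a well-quasi-order while remaining monotone under transitions. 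The difficulty is intrinsic to alternation, since configurations are \emph{sets} of clock valuations rather than single valuations, so the order must be set up to survive both the passage of time — which shifts fractional parts and pushes integer parts past $c_{\max}$ — and the universal/existential branching; verifying monotonicity against this order and confirming the well-quasi-order property is where the real work lies.
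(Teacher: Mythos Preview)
The paper does not actually prove this statement: it is quoted as a background result from the literature, with the proof delegated entirely to the citations \cite{Ouaknine05,OuaknineW06}. Your proposal is a faithful high-level sketch of precisely the Ouaknine--Worrell argument from those references --- the translation to one-clock alternating timed automata, decidability of their finite-word emptiness via a well-quasi-order on abstracted configurations and backward reachability, the non-primitive-recursive lower bound via (insertion/lossy) channel machines, and undecidability over infinite words via the recurrent-state problem --- so there is nothing to compare against within this paper itself.
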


\subsubsection{Partially Punctual Metric Temporal Logic, PMTL}
Let $\mtlpwuisnp$ be a subclass of $\mtl$ with the following grammar.
 $$\varphi::=\mathsf{atom} ~|\true~|\varphi \wedge \varphi~|~\neg \varphi~|
 ~\varphi \until_{I} \varphi~|~\varphi \since_{np} \varphi$$

Similarly, $\mtlpwunpsi$ is a subclass of $\mtl$ with the following grammar.
 $$\varphi::=\mathsf{atom} ~|\true~|\varphi \wedge \varphi~|~\neg \varphi~|
 ~\varphi \until_{np} \varphi~|~\varphi \since_{I} \varphi$$

PMTL is a subclass of MTL containing formulae $\varphi \in \mtlpwuisnp \cup \mtlpwunpsi$. In other words, punctual guards are only allowed in either $\until$ modalities or $\since$ modalities but not both.

\subsection{Timed Propositional Temporal Logic}
Another prominent real time extension of linear temporal logic is $\mathsf{TPTL}$ \cite{AH94}, where timestamps of the points of interest are registered in a set of real-valued variables called freeze variables. The timing constraints are then specified using constraints on these freeze variables as shown below. 
 
 The syntax of $\mathsf{TPTL}$ is defined by the following grammar:
 $$\varphi::=\mathsf{atom}~|\true~|\varphi \wedge \varphi~|~\neg \varphi~|
 ~\varphi \until \varphi~|~\varphi \since \varphi~|~y.\varphi~|~T-y\in I$$
 where $y$ is a freeze variable, $I \in \intinterval$.
 
$\mathsf{TPTL}$ is interpreted over finite timed words over $\Sigma$. For a timed word $\rho=(\sigma_1,t_1)\dots(\sigma_n,t_n)$, we define the satisfiability relation, $\rho, i, \nu \models \phi$ saying that the formula $\phi$ is true at position
 $i$ of the timed word $\rho$ with valuation $\nu$ over all the freeze variables.
 
 \fbox{
 	\begin{tabular}{l c l}
 		$\rho,i, \nu \models \true$ & $\leftrightarrow$ & $i \in dom(\rho)$\\
 		$\rho, i, \nu \models a$  & \; $\leftrightarrow$ & \; $a \in \sigma_{i}$\\
 		$\rho,i,\nu  \models \neg \varphi$ & \; $\leftrightarrow$  & \; $\rho,i,\nu \nvDash  \varphi$\\
 		$\rho,i,\nu \models \varphi_{1} \wedge \varphi_{2}$  & \; $\leftrightarrow$  & \; $\rho,i,\nu \models \varphi_{1}$ 
 		and $\rho,i,\nu\ \models\ \varphi_{2}$\\
 		$\rho,i,\nu \models x.\varphi $  & \; $\leftrightarrow$ & \; $\rho,i,\nu[x \leftarrow t_i] \models \varphi$\\
 		$\rho,i,\nu \models T-x \in I $  & \; $\leftrightarrow$ & \; $t_i - x \in I$\\
 		$\rho,i,\nu\ \models\ \varphi_{1} \until \varphi_{2}$  & \; $\leftrightarrow$  & \; $\exists j > i$, 
 		$\rho,j,\nu \ \models\ \varphi_{2}$,  and  $\rho,k,\nu \ \models\ \varphi_{1}$ $\forall$ $i < k < j$\\
 		$\rho,i,\nu\ \models\ \varphi_{1} \since \varphi_{2}$  & \; $\leftrightarrow$  & \; $\exists\ j < i$, 
 		$\rho,j, \nu\ \models\ \varphi_{2}$,  and  $\rho,k,\nu\ \models\ \varphi_{1}$ $\forall$  $j < k < i$
 	\end{tabular}\\
 }
\\
\textbf{Remark} In the original paper, $\tptl$ was introduced with {\it non-strict} Until ($\until^{ns}$) and Since ($\since^{ns}$)  modalities along with next($\nex$) and previous ($\prev$) operators as non-strict until (and since) are not able to express next (and previous, respectively). While $\false \until \varphi \equiv \nex (\varphi)$ and $\varphi_1 \until^{ns} \varphi_2 \equiv \varphi_2 \vee (\varphi_1 \wedge \varphi_1 \until \varphi_2)$. Similar, identity holds for since. Thus, the use of non-strict modalities along with next and previous modalities can be replaced with corresponding strict modalities. To maintain uniform notations within the thesis, we choose to use strict modalities. $\rho$ satisfies $\phi$ denoted $\rho \models \phi$ iff $\rho,1,\bar{0}\models \phi$. Here $\bar{0}$ 
 is the valuation obtained by setting all freeze variables to 0.  
 We denote by $\mathsf{k{-}\tptl}$ the fragment of  $\mathsf{TPTL}$ using at most $k$ freeze variables.
 The fragment of $\mathsf{TPTL}$ with $k$ freeze variables and using only $\until$ modality is denoted $\mathsf{k{-}\tptl[\until_I]}$.

\section{Satisfiability Checking for Open TPTL}
This section is dedicated to our first contribution, i.e., showing that one variable $\tptl$ does not enjoy the benefits of relaxing punctuality. As mentioned above, one can encode punctual constraints by boolean combination of non-punctual constraints. Hence, to be fair, we further strengthen the notion of non-punctuality to only allow specification over topologically open sets. Hence, the punctual guards are no longer expressible as punctual intervals are closed intervals. To this end we first define the fragment ``Open TPTL'' $\optptl$.

 \subsection{Open $\mathsf{TPTL}$ with 1 Variable ($\optptl$)}
 Open TPTL $\optptl$ is a subclass of $\tptl$ with following restrictions:  
 \begin{itemize}
 	\item For any timing constraint $T-x \in I$ appearing within the scope of even (respectively odd) number of negations, $I$ is an open (respectively closed) interval. 

 \end{itemize}
 Note that this is a stricter restriction than non-punctuality as it can assert a property only within an open timed regions. 
$\optptl[\fut,\nx]$ is a subclass of $\optptl$ where  only $\fut$, $\Box$ and $\nx$ temporal modalities are allowed. 
Similarly, $\optptl[\fut, \past, \nx]$ is a subclass of $\optptl$ where formulae are only allowed $\fut$, $\Box$, $\past$, $\boxminus$ and $\nx$ temporal modalities. The rest of the section is dedicated in proving the following theorem.

 \begin{theorem}
 \label{thm:optptl}
 	Satisfiability Checking of:
 	\begin{enumerate}
 		\item  $\optptl[\fut,\nx]$  over finite timed words is non-primitive recursive hard.
 		\item $\optptl[\fut,\past,\nx]$ is undecidable.
 	\end{enumerate}
 \end{theorem}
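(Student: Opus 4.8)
The plan is to prove both lower bounds by reduction, closely mirroring the Ouaknine--Worrell arguments behind the result for $\mtlpwui$, but carrying every construction out with \emph{open} positive guards so that the formulas stay inside $\optptl$. For part (1) I would reduce from the control-state reachability problem for channel machines with insertion errors (insertion channel machines with emptiness testing), which is decidable but non-primitive recursive; for part (2) I would reduce from the halting problem for perfect (error-free) channel machines, equivalently two-counter Minsky machines, which is undecidable over finite runs. In both cases a run is encoded as a finite timed word in which each machine configuration occupies a half-open block of duration $1$, successive configurations sit in successive unit blocks, and the channel contents (or counter values) are recorded by the multiset of designated events inside a block.

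Structural well-formedness of such an encoding --- the order of symbols inside a block, the current control location, the start and halt markers, and the presence of a single block separator per unit --- is $\ltl$-definable and hence expressible with $\fut,\Box,\nx$ using no timing guard at all. The real work is the transition relation, which must relate an event in one block to a matching event exactly one time unit later. This is where openness bites, since I cannot write $\fut_{[1,1]}$. Instead I would freeze the timestamp at an event, $x.\psi$, and constrain the matching event using the two \emph{adjacent} open intervals $(0,1)$ and $(1,2)$ about the shared endpoint $1$: positively I assert the matched event lies in $T-x\in(0,2)$, and I exploit the block rigidity maintained by induction (one marker per unit, consecutive markers inductively one unit apart) together with an $\fut$/$\nx$ ordering constraint, so that the only consistent witness sits at distance $1$. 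It is precisely the adjacency of these open guards that recovers the effect of punctuality, which is the phenomenon this section is designed to expose.

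The decisive correctness point is that the future-only open encoding captures exactly the \emph{error-prone} semantics. Because $\optptl[\fut,\nx]$ can force ``every event of the current block has a copy one unit ahead'' but cannot force the converse direction, the encoded runs may contain spurious extra events in later blocks --- exactly an insertion error. Thus satisfiable formulas correspond to insertion-channel-machine runs and nothing more, yielding the non-primitive-recursive lower bound of part (1). For part (2) I would add the past modalities $\past,\boxminus$, still with open positive guards, to install the missing backward matching by a symmetric freeze-and-squeeze of the form $x.(\dots\past(\dots T-x\in(0,2)\dots))$, forcing every event in a block to have a genuine predecessor one unit earlier. This eliminates insertion errors and pins the word to a perfect channel-machine computation, so undecidability of the perfect model transfers to $\optptl[\fut,\past,\nx]$.

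I expect the main obstacle to be verifying the squeezing gadget: proving that, under the openness restriction, the combination of the adjacent guards $(0,1)$ and $(1,2)$ with the inductively maintained block rigidity leaves distance exactly $1$ as the only realizable witness, and that in the future-only case no open-interval formula can inadvertently enforce more than the one-directional matching --- so that the reduction captures the insertion model (and hence merely the non-primitive-recursive lower bound) rather than accidentally the undecidable perfect model.
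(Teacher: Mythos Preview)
Your high-level strategy matches the paper's: encode machine runs in successive unit blocks, use future-only open guards to obtain only the lossy (insertion-error) semantics for the non-primitive-recursive bound, then add past to kill the insertions for undecidability. The paper uses counter machines with incremental error rather than insertion channel machines, but that is immaterial.

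The gap is in your copy gadget. You plan to maintain block markers ``inductively one unit apart'' and then squeeze the matching event to distance exactly~$1$ via $T-x\in(0,2)$ plus rigidity. Neither step works under openness: with only open positive guards you \emph{cannot} keep consecutive markers at distance exactly~$1$ (any inductive attempt drifts), and the guard $T-x\in(0,2)$ is far too slack to localize anything. The paper explicitly abandons rigid unit spacing: the start markers $\mathcal S$ are forced to lie at distance in $(1,2)$ while the end markers $\mathcal F$ lie at distance in $(0,1)$, so successive configurations occupy \emph{shrinking} regions inside their unit intervals. The copy is then not ``event at distance~$1$'' but ``the \emph{last} event strictly within $(0,1)$ of me has the right label'', expressed as
\[
x.\bigl(\ \fut(\,a_j \wedge T-x\in(0,1)\wedge \nx(T-x\in(1,2))\,)\ \bigr),
\]
and similarly for $b_j$. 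From consecutive $(a_j,t)(b_j,t')$ this forces a $b_j$ somewhere in $(t{+}1,t'{+}1)$, which together with the $a_jb_j$-alternation pattern yields an injection from non-last $a_jb_j$ pairs of one block into $a_jb_j$ pairs of the next --- exactly the incremental-error semantics. For part~(2) the same ``last event in the open interval'' trick is applied in the past direction, giving the reverse injection and hence a bijection. So the phenomenon you correctly flagged --- adjacent open intervals $(0,1)$ and $(1,2)$ recovering enough punctuality --- is realized not by pinning distance~$1$, but by naming the boundary point of an open interval via $\nx$ jumping across it.
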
 

The undecidability of $\optptl[\fut, \past, \nx]$ is via reduction to halting problem of counter machines, and non-primitive recursive hardness for $\optptl[\fut, \nx]$ is via reduction to halting problem of counter machines with incremental errors.
 
\subsection{Counter Machines}
A deterministic $k$-counter machine is a  $k+1$ tuple ${\cal M} = (P,C_1,\ldots,C_k)$, where
\begin{enumerate}
	\item  $C_1,\ldots,C_k$ are counters taking values in  $\mathbb{N} \cup \{0\}$ (their initial values  are set to zero);
	\item  $P$ is a finite set of instructions with labels $p_1, \dots, p_{n-1},p_n$. 
	There is a unique instruction labeled HALT. For $E \in \{C_1,\ldots,C_k\}$, the instructions in $P$ are of the following forms:
	\begin{enumerate} 
		\item  $p_i$: $Inc(E)$, goto $p_j$, 
		\item  $p_i$: If $E =0$, goto $p_j$, else go to $p_k$, 
		\item $p_i$: $Dec(E)$, goto $p_j$,
		\item  $p_n$: HALT. 
	\end{enumerate}
\end{enumerate}
A configuration $W=(i,c_1,\ldots,c_k)$ of ${\cal M}$ is given by the value of the current program counter $i$ and values $c_1,c_2,\ldots,c_k$ of the counters $C_1,C_2,\ldots,C_k$.  A move of the  counter machine $(l,c_1,c_2,\ldots,c_k) \rightarrow (l',c_1',c_2',\ldots,c_k')$ denotes that configuration $(l',c_1',c_2',\ldots,c_k')$ is obtained from $(l,c_1,c_2,\ldots,c_k)$ by executing the $l^{th} $ instruction $p_l$.
If $p_l$ is an increment (or a decrement) instruction of the form ``$INC(c_x)$($Dec(c_x)$, respectively), goto $p_j$'', then $c'_x=c_x+1$ (or $c_x-1$, respectively), while $c'_i=c_i$ for $i \neq x$ and $p'_l = p_j$ is the respective 
next instruction. 
While if $p_l$ is a zero check instruction of the form ``If $c_x =0$, goto $p_j$, else go to $p_k$'', then $c'_x=c_x$ for all $i$, and $p'_l=p_j$ if $c_x=0$ else $p'_l = p_k$.
\begin{theorem}
	\label{theo:minsky}
	\cite{minsky} The halting problem for deterministic $k$-counter machines is undecidable for $k \ge 2$.
\end{theorem}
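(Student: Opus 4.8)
The plan is to establish undecidability of the halting problem for deterministic $2$-counter machines by reduction from the halting problem of deterministic Turing machines, which is undecidable, and then to lift the result to all $k \ge 2$. The lifting is immediate: a deterministic $2$-counter machine is a special case of a deterministic $k$-counter machine for every $k \ge 2$, obtained by never issuing any instruction that touches the surplus counters $C_3,\dots,C_k$. This embedding is computable, so $2$-counter halting reduces to $k$-counter halting and undecidability of the former yields undecidability of the latter for every $k \ge 2$. Hence it suffices to treat $k=2$.

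First I would carry out the classical encoding of a Turing machine by a counter machine with a fixed number of counters (three suffice). Splitting the tape at the head into a left half-word and a right half-word, each half-word, read as a string over a tape alphabet of size $b$, is encoded as a natural number in base $b$ and stored in one counter, with a third counter reserved as scratch space. A single move of the Turing machine is simulated by a bounded sequence of the primitives $Inc(E)$, $Dec(E)$ and zero-tests: reading the scanned symbol amounts to extracting a digit (a remainder modulo $b$), overwriting it and shifting the head amount to multiplication or division by $b$ together with a transfer of one digit between the two tape-counters, and the finite control of the Turing machine is folded into the instruction labels $P$. Since the Turing machine is deterministic and each arithmetic subroutine can be made deterministic, this yields an effective translation producing a deterministic multi-counter machine that halts precisely when the Turing machine halts.

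The heart of the argument, and the step I expect to be the main obstacle, is the reduction from $m$ counters to two. Here I would encode an entire configuration $(c_1,\dots,c_m)$ as the single integer $N = p_1^{c_1} p_2^{c_2} \cdots p_m^{c_m}$ held in the first physical counter, with the second physical counter used purely as scratch. Incrementing $C_j$ becomes multiplication of $N$ by the prime $p_j$; decrementing $C_j$ becomes division of $N$ by $p_j$, which is well defined precisely because $c_j>0$ guarantees $p_j \mid N$; and the test $c_j=0$ becomes the divisibility test $p_j \nmid N$. Each of these must be realised using only $Inc$, $Dec$ and zero-tests on the two physical counters: multiplication and division by the fixed constant $p_j$ are implemented by repeated-addition and repeated-subtraction loops that shuttle the value back and forth between the two counters, while the divisibility test drains one counter in blocks of size $p_j$, records whether the remainder vanishes, and then restores the value. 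The delicate points are keeping every loop deterministic, ensuring that each macro-routine leaves the scratch counter at zero so that successive macro-instructions compose correctly, and routing the control labels so that no spurious branch (for instance a mistaken zero-test outcome) is reachable.

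Finally I would compose the two translations into a single computable map sending each deterministic Turing machine to a deterministic $2$-counter machine, verify that the composed simulation preserves halting behaviour exactly, and conclude that any decision procedure for $2$-counter halting would decide Turing machine halting, a contradiction. This establishes undecidability for $k=2$, and by the embedding noted above for every $k \ge 2$.
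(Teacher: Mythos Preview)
Your proposal is a correct outline of the classical Minsky argument: simulate a Turing machine by a bounded number of counters via base-$b$ encoding of the two tape halves, then compress to two counters using the prime-power G\"odel encoding $N = p_1^{c_1}\cdots p_m^{c_m}$ with the second counter as scratch. The points you flag as delicate (determinism of each macro, restoring the scratch counter to zero, correct routing of control) are exactly the ones that need care, and they can all be handled.

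That said, there is nothing to compare against: the paper does not prove this theorem at all. It is stated purely as a citation to Minsky and used as a black box for the subsequent reduction showing undecidability of $\optptl[\fut,\past,\nx]$. So your write-up is not an alternative to the paper's proof but rather a reconstruction of the cited external result; if anything, it supplies detail the paper deliberately omits.
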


\subsection{Incremental Error Counter Machine ($\iecm$)}
An incremental error counter machine ($\iecm$) is a counter machine where a particular configuration can have counter values with arbitrary positive error.
Formally, an  incremental error $k$-counter machine is a $k+1$ tuple ${\cal M} = (P,C_1,\ldots,C_k)$ where $P$ is a set of instructions like above and $C_1$ to $C_k$ are the counters.  The difference between a counter machine with and without incremental counter error is as follows:
\begin{enumerate}
\item Let $(l,c_1,c_2\ldots,c_k) \rightarrow (l',c_1',c_2'\ldots,c_k')$ be a move of a counter machine without error when executing $l^{th}$ instruction.
\item The corresponding move in the increment error counter machine is 
$$(l,c_1,c_2\ldots,c_k) \rightarrow \{(l',c_1'',c_2''\ldots,c_k'') | c_i'' \ge c_i', 1 \leq i \leq k \}$$
 Thus the value of the counters are non deterministic. We use these machines for proving lower bound complexity in section \ref{open-tptl}.
\end{enumerate}
\begin{theorem}
	\label{theo:lazic}
	\cite{demriL06} The halting problem for incremental error $k$-counter machines is non primitive recursive.
\end{theorem}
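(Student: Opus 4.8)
The plan is to establish the stated complexity in two halves: a decidability (upper-bound) argument showing the halting problem for an $\iecm$ is decidable, and a matching lower-bound argument showing it admits no primitive recursive decision procedure. The substantive and difficult half is the lower bound; the upper bound is a routine application of the theory of well-structured transition systems (WSTS). Note at the outset that, unlike the error-free case of Theorem~\ref{theo:minsky}, the incremental-error semantics make halting \emph{decidable}, so the hardness cannot come from an undecidable simulation and must instead be extracted from the complexity of a fast-growing function.

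For decidability I would order the configuration space $\{p_1,\dots,p_n\}\times\N^k$ of $\Mm$ by the componentwise ordering (same control location, counters dominated pointwise). By Dickson's lemma this is a well-quasi-order. The incremental-error semantics $(l,\bar c)\to\{(l',\bar c'')\mid \bar c''\ge \bar c'\}$ is \emph{monotone}: enlarging the present counters enlarges the set of reachable successors. Hence the system is well-structured, and reachability of the upward-closed target ``control at HALT'' is decidable by the standard backward-saturation algorithm, whose termination follows from the wqo. This gives decidability; what provably fails is a primitive recursive time/space bound, which is the content of the lower bound.

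For the lower bound I would give a reduction forcing every halting computation to faithfully carry out an Ackermann-scale computation, so that the length of the shortest halting run and the magnitudes of the counters are non-primitive recursive in the size of $\Mm$. Concretely, I would build reliable gadgets that weakly compute the levels $F_i$ of the fast-growing hierarchy up to $F_\omega$ (Ackermann), and compose them so that HALT is reachable only after simulating $A(m)$ steps for an input parameter $m$. The central device for coping with the errors is a \emph{budget invariant}: introduce an auxiliary counter $B$ and maintain that the sum of all working counters equals a fixed budget. Since the only errors are spurious \emph{increments}, any error strictly raises some counter and hence the total; I then insert \emph{checkpoint} gadgets (equality/zero comparisons against $B$) whose successful passage is impossible once the invariant has been violated upward. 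Thus every run that actually reaches HALT must have used only vacuous errors (the choice $\bar c''=\bar c'$ is always available, so the honest error-free run exists), meaning the machine performed a genuine reliable computation of the encoded fast-growing function; and conversely the honest run halts iff that computation completes.

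The hard part will be making the budget/checkpoint construction \emph{faithful in both directions} under the gainy semantics: I must ensure that no exploitation of spurious increments can shortcut the fast-growing computation and reach HALT early, while still guaranteeing the intended error-free run is admissible. This is precisely the delicate Ackermann-hardness engineering of Schnoebelen for lossy counter machines, transported to the incremental setting by reversing the roles of increments and decrements; an alternative and arguably cleaner route I would consider is to reduce directly from reachability in lossy counter machines (known to be Ackermann-hard) via a time-reversal of runs, under which spontaneous losses become spontaneous gains, taking care that reversed zero-tests and the branching of zero-test instructions are simulated correctly. Either way, a primitive recursive decision procedure for $\iecm$ halting would yield a primitive recursive bound on the encoded Ackermann computation, a contradiction; combined with the WSTS decidability above, this pins the complexity at non-primitive recursive.
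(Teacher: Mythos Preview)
The paper does not prove this theorem at all: it is stated with the citation \cite{demriL06} and used as a black box for the subsequent lower-bound reduction. There is therefore no ``paper's own proof'' to compare your proposal against.

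That said, your sketch is essentially the argument one finds in the cited literature. The decidability half via well-structured transition systems and Dickson's lemma is standard and correct. For the lower bound, your second suggested route---reducing from reachability in lossy counter machines by reversing runs, so that spontaneous losses become spontaneous gains---is the cleaner one and is how the result is usually derived (gainy and lossy counter machines are dual under time reversal, and Ackermann-hardness for the lossy case is Schnoebelen's result). Your first route, the direct budget-invariant construction, is workable but, as you note, delicate: the checkpoint gadgets must themselves be immune to gains while they are running, which requires care in how the comparison against $B$ is implemented (typically by draining into $B$ and zero-testing, arranged so that any gain during the drain forces a later zero-test to fail). Either route yields the claimed non-primitive-recursive lower bound.
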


\subsection{Proof of Theorem \ref{thm:optptl}}
\label{open-tptl}

\begin{proof}
\begin{enumerate}
\item We encode the runs of any given incremental error $k$-counter machine using $\optptl$ formulae with $\fut, \nx$ modalities. 
We will encode a particular computation of any counter machine using timed words. The main idea is to construct an $\optptl[\fut,\nx]$ formula $\varphi_{\iecm}$ for any given incremental error $k$-counter machine $\iecm$ such that $\varphi_{\iecm}$ is satisfied by only those timed words that encode the halting computation of $\iecm$. 
Moreover, for every halting computation $\mathcal{C}$ of the $\iecm$, at least one timed word $\rho_C$ encodes $\mathcal{C}$
and satisfies $\varphi_{\iecm}$. 

We encode each computation of a incremental error $k$-counter machine, $(P,C)$ where $P = \{p_1,\ldots,p_n\}$ is the set of instructions and $C = \{c_1,\ldots,c_k\}$ 
is the set of counters   using 
timed words over the alphabet $\Sigma_{\iecm}= \bigcup_{j \in \{1,\ldots,k\}} (S \cup F \cup \{a_j,b_j\})$ where $S = \{s^{p}|p \in {1,\ldots,n}\}$ and $F = \{f^{p}|p \in {1,\ldots,n}\}$ as follows:
\\The  $i^{th}$ configuration, $(p,c_1,\ldots,c_k)$ is encoded  in the timed region $[i,i+1)$ with the sequence 
\begin{quote}
	$s^{p} (a_1b_1)^{c_1} (a_2b_2)^{c_2} \ldots(a_kb_k)^{c_k} f^{p}$.
\end{quote}

\begin{figure}[h]
\includegraphics[scale=0.5]{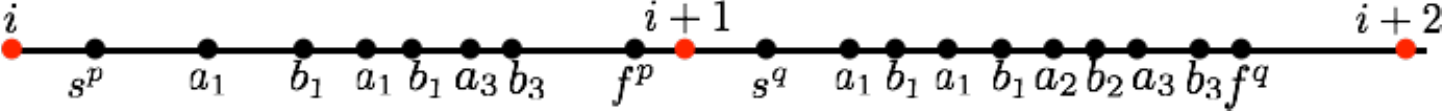}	
\label{conf1}
\caption{Assume there are 3 counters, and that the $i$th configuration is $(p, 2,0,1)$.  Let the instruction $p$ increment counter 2 and go to 
instruction $q$. Then the $i+1^{th}$ configuration is $(q,2,1,1)$. Note that the $i^{th}$ configuration 
is encoded between integer points $i, i+1$, while configuration $i+1$ is encoded between 
integer points $i+1, i+2$.}
\end{figure}

The concatenation of these time segments of a timed word encodes the whole computation.
Untimed behaviour of the language described above is in accordance with the following pattern:
	$$ (\mathcal{S} (a_1b_1)^* (a_2b_2)^* \ldots (a_kb_k)^*\mathcal{F})^*$$

where $\mathcal{S} = s^p$ iff $\mathcal{F} = f^p$ for $p \in \{1,2,\ldots,n\}$.

To construct a formula $\varphi_{\iecm}$, the main challenge is to propagate the behaviour from the time segment 
$[i,i+1)$ to the time segment $[i+1,i+2)$ such that the latter encodes the $i+1^{th}$ configuration of the input $\iecm$ given that the former encodes the $i^{th}$ configuration. 
The usual idea is to copy all the $a$'s from one configuration to another using punctuality. This is not possible in a non-punctual logic.
We preserve the number of $a$s and $b$s using the following idea:
\begin{itemize}
	\item Given any non last $(a_j,t)(b_j,t')$ before $\mathcal{F}$(for some counter $c_j$), of a timed word encoding a computation.
	We assert that the last symbol in $(t,t+1)$ is $a_j$ and the last symbol in 
	$(t',t'+1$) is $b_j$. 
	\item We can easily assert that the untimed sequence of the timed word is of the form
		$$ (\mathcal{S} (a_1b_1)^* (a_2b_2)^* \ldots (a_kb_k)^*\mathcal{F})^*$$
	\item The above two conditions imply that there is at least one $b_j$ within time $(t+1,t'+1)$. Thus, all the non last $a_j,b_j$ are copied to the segment encoding next configuration. 
	Now appending one $a_jb_j$, two $a_jb_j$'s or no $a_jb_j$'s  depends on whether the instruction was copy, increment or decrement operation. 
\end{itemize}

	$\varphi_{\iecm}$ is obtained as a conjunction of several formulae. Let $\mathcal{S},\mathcal{F}$ be a shorthand for 
$\bigwedge \limits_{p\in \{1,\ldots,n\}}s^{p}$ and  $\bigwedge \limits_{p\in \{1,\ldots,n\}}f^{p}$, respectively. We also define macros $A_j = \bigvee \limits_{w \ge j} a_w$ and $A_{k+1} = \bot$
		 We now give formula for encoding the machine. Let $\Cc=\{1,\ldots,k\}$ and $\Pp=\{1,\ldots,n\}$ be the indices of the counters and the instructions.  
	\begin{itemize}
		\item \textbf{Expressing untimed sequence}: The words should be of the form
		 $$ (\mathcal{S} (a_1b_1)^* (a_2b_2)^* \ldots (a_kb_k)^*\mathcal{F})^*$$
		This could be expressed in the formula below
		\begin{quote}

			$\varphi_1 = \bigwedge \limits_{j \in \Cc, p \in \Pp} 
			\wB[s^p \rightarrow \nex(A_1 \vee f^p)] \wedge \wB[a_j \rightarrow \nex(b_j)] \wedge 
			\\
			~~~~~~~~~~~~~~~\wB[b_j \rightarrow \nex(A_{j+1} \vee f^p)] 
			\wedge 
			\wB[f^p \rightarrow \nex (\mathcal{S} \vee \wB(\false))]$
		\end{quote}

	\item \textbf{Initial Configuration}: There is no occurrence of $a_jb_j$ within $[0,1]$. The program counter value is $1$.
		\begin{quote}
			$\varphi_2 = x.\{s^{1} \wedge \nex(f^{1} \wedge x \in (0,1))\} $
		\end{quote}
		\item \textbf{Copying $\mathcal{S},\mathcal{F}$}: Any $(\mathcal{S},u)$ (read as any symbol from 
		$\mathcal{S}$ at time stamp $u$) 
		 $(\mathcal{F},v)$ (read as any symbol from 
		$\mathcal{F}$ at time stamp $v$) has a next occurrence $(\mathcal{S},u')$, $(\mathcal{F},v')$ in the future such that 
		$u' - u \in (k,k+1)$ and $v' - v \in (k-1,k)$. 
		Note that this condition along with $\varphi_1$ and $\varphi_2$ makes sure that $\mathcal{S}$ and $\mathcal{F}$ occur only 
		within the intervals of the form $[i,i+1)$ where $i$ is the configuration number. Recall that $s^n,f^n$
		represents the last instruction (HALT). 
		\begin{quote}
			$\varphi_3 = \wB x.\{(\mathcal{S} \wedge \neg s^{n}) \rightarrow \neg \fut(x \in [0,1] \wedge \mathcal{S}) \wedge \fut(\mathcal{S} \wedge x \in (1,2))\} \wedge 
			 \wB x.\{(\mathcal{F} \wedge \neg f^{n}) \rightarrow \fut(\mathcal{F} \wedge x \in (0,1))\}$
		\end{quote}
	Note that the above formula ensures that subsequent configurations are encoded in 
	smaller and smaller regions within their respective unit intervals, since consecutive 
	symbols from $\mathcal{S}$ grow apart from each other (a distance $>1$), while consecutive
	symbols from $\mathcal{F}$  grow closer to each other (a distance $<1$). See Figure \ref{conf2}.
		 \begin{figure}[h]
\includegraphics[scale=0.5]{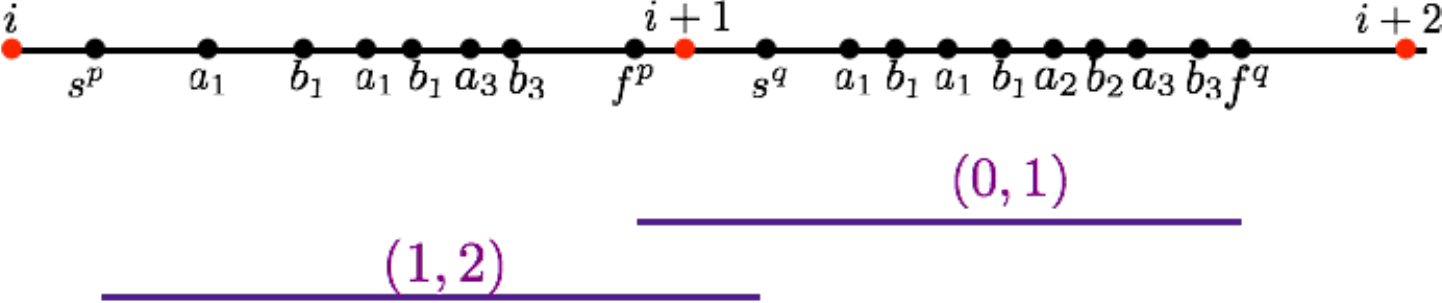}	
\caption{Subsequent configurations in subsequent unit intervals grow closer and closer.}
\label{conf2}
\end{figure}

		\item  Beyond $p_n$=HALT, there are no instructions
		\begin{quote}
		$\varphi_{4}\ =\  \wB[f^{n} \rightarrow \Box(\false)]$
	\end{quote}
		
		\item At any point of time, exactly one event takes place. Events have distinct time stamps.
		\begin{quote}
		$\varphi_6\ =\ [\bigwedge \limits_{y \in \Sigma_{\iecm}}\wB[y \rightarrow
		\neg(\bigwedge \limits_{ \Sigma_{\iecm} \setminus \{y\}}(x))] \wedge \wB[\Box(\false) \vee \nex(x \in (0,\infty))]
		$
		\end{quote}
		\item Eventually we reach the halting configuration $\langle p_n,c_1,\ldots,c_k \rangle$: $\varphi_6 = {\wF} s^{n}$\\
		
		\item Every non last $(a_j,t)(b_j,t')$ occurring in the interval $(i,i+1)$  should be copied in the interval $(i+1,i+2)$. We specify this 
		condition as follows:
		\begin{itemize}
		\item state that from every non last $a_j$   the last symbol within $(0,1)$ is $a_j$.
		Similarly from every non last $b_j$, the last symbol within $(0,1)$ is $b_j$.
		Thus  $(a_j,t)(b_j,t')$ will have a $(b_j,t'+1-\epsilon)$ where $\epsilon \in(0,t'-t)$. 
		
		 \begin{figure}[h]
\includegraphics[scale=0.5]{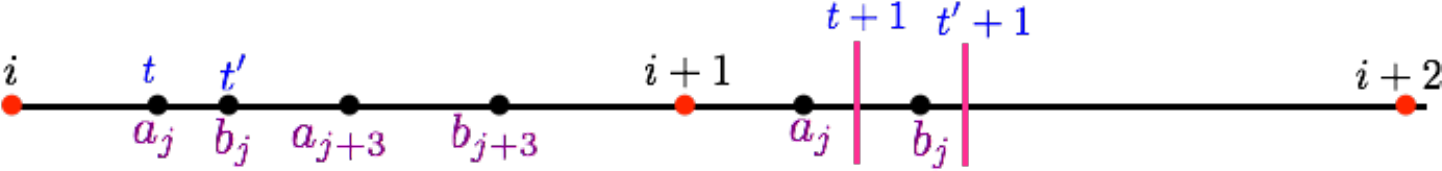}	
\caption{Consider any $a_jb_j$, where $a_j$ is at time $t$ and $b_j$ is at time $t'$. There are further $a,b$ symbols 
in the unit interval, like as shown above $a_{j+3}b_{j+3}$ occur after $a_jb_j$ in the same unit interval. 
Then the $a_j, b_j$ are copied such that the last symbol in the interval $(t, t+1)$ is an $a_j$ 
and the last symbol in the interval $(t', t'+1)$ is a $b_j$. There are no points 
between the  $a_j$ shown above in $(i+1, i+2)$ and the time stamp $t+1$ as shown above. Likewise, 
there are no points between the $b_j$ shown above in $(i+1, i+2)$ and the time stamp $t'+1$ as shown above. 
Note that the time stamp of the copied $b_j$ in $(i+1, i+2)$ lies in the interval $(t+1, t'+1)$.}
\label{conf3}
\end{figure}
\item Thus all the non last $a_jb_j$ will incur a $b_j$ in the next configuration. $\varphi_1$ makes sure that there is an $a_j$ between two $b_j$'s. Thus, this condition along with $\varphi_1$, makes sure that the non last $a_jb_j$ sequence is conserved. Note that there can be some $a_jb_j$ which are arbitrarily inserted. These insertions errors model the incremental error of the machine.
Any such inserted $(a_j,t_{ins})(b_j,t_{ins}')$ in $(i+1, i+2)$ is such that there is a $(a_j,t)(b_j,t')$ in $(i, i+1)$ with $t_{ins}' \in (t+1,t'+1)$. 
Just for the sake of simplicity we assume that $a_{k+1}= false$.
\end{itemize}
Let $nl(a_j) = a_j \wedge \neg last(a_j)$, $nl(b_j) = b_j \wedge \neg last(b_j)$, $\psi_{nh} = \neg \fut (f^n \wedge x\in[0,1])$, \\
$last(a_j) = a_j \wedge \nex(\nex (\mathcal{F} \vee A_{j+1})))$ and $last(b_j) =  b_{j} \wedge \nex(\mathcal{F} \vee A_{j+1})$. 
\begin{quote}
$\varphi_7 = \bigwedge \limits_{j \in \Cc} \wB x.[(nl(a_j) \wedge \psi_{nh}) \rightarrow \fut(a_j \wedge x \in 
(0,1) \wedge \nex(x \in  (1,2)))]  \wedge$\\$ 
\wB x.[(nl(b_j) \wedge \psi_{nh}) \rightarrow \fut(b_j \wedge x \in  (0,1) \wedge \nex(x \in  (1,2)))]$
\end{quote}
\end{itemize}
We define a short macro 		
$Copy_{\Cc \setminus W}$: Copies the content of all the intervals encoding counter values except counters in $W$. Just for the sake of simplicity we define the following notation 
\begin{quote}
$Copy_{\Cc  \setminus W} = \bigwedge \limits_{j \in \Cc \setminus W} \wB x.\{last(a_j) \rightarrow (a_j \wedge x \in (0,1) \wedge \nex(b_j \wedge x \in (1,2) \wedge \nex (\mathcal{F})))\}$
\end{quote}	
Using this macro we define the increment,decrement and jump operation.

\begin{enumerate}
	\item Consider the zero check instruction $p_g$: If $C_j=0$ goto $p_h$, else goto $p_d$. $\delta _1$ specifies the next configuration when the check for zero succeeds. $\delta_2$ specifies the else condition.
	\begin{quote}
		$ 
		\varphi^{g,j=0}_{8}\ =\  Copy_{\Cc \setminus \{\emptyset\}}
	\wedge \delta_1 \wedge \delta_2$
	\end{quote}
	\begin{quote}
		$\delta_1=	\wB[\{s^g \wedge ((\neg a_j) \until \mathcal{F})\} \rightarrow (\neg \mathcal{S}) \until s^h)]$
		
		$\delta_2 =   \wB[\{s^g \wedge ((\neg a_j) \until a_j )\} \rightarrow (\neg \mathcal{S}) \until s^d)]$.
		
	\end{quote}
	\item $p_g$: $Inc(C_j)$ goto $p_h$. The increment is modelled by appending exactly one $a_jb_j$ in the next interval just after the last copied $a_jb_j$
	\begin{quote}
		$ 
		\varphi^{g,inc_j}_{8}\ = Copy_{\Cc \setminus \emptyset} \wedge \wB(s^g \rightarrow (\neg \mathcal{S}) \until s^h )\wedge \psi^{inc}_{0} \wedge \psi^{inc}_{1}$
		\end{quote}
		\begin{itemize}
		\item The formula $\psi^{inc}_{0} = \wB [(s^g \wedge (\neg a_j \until f^g)) \rightarrow (\neg \mathcal{S} \until x.(s^h \wedge \fut(x \in (0,1) \wedge a_j))]$ specifies the increment of the counter $j$ when the value of $j$ is zero. 
		\item The formula
		$\psi^{inc}_{1}=\wB [\{s^g \wedge ((\neg \mathcal{F}) \until (a_j))\} \rightarrow  (\neg \mathcal{F}) \until x.\{last(a_j) \wedge \fut(x \in (0,1) \wedge (a_j \wedge \nex\nex (last(a_j) \wedge x \in (1,2))))\} ]$
	specifies the increment of counter $j$ when $j$ value is non zero by appending exactly one pair of $a_jb_j$ after the last copied $a_jb_j$ in the next interval.
		 	
		\end{itemize}

		\item $p_g$: $Dec(C_j)$ goto $p_h$. Let $\mathsf{second-last}(a_j)= a_j \wedge \nex(\nex(last(a_j)))$. Decrement is modelled by avoiding copy of last $a_jb_j$ in the next interval. 
			\begin{quote}
			$ 
				\varphi^{g,dec_j}_{8}\ =\   Copy_{\Cc \setminus j} \wedge \wB(s^g \rightarrow (\neg \mathcal{S}) \until s^h )\wedge \psi^{dec}_0
		        \wedge \psi^{dec}_1
			$
			\end {quote}
			\begin{itemize}
			\item 	
			The formula $\psi^{dec}_0 = \wB [\{s^g \wedge (\neg a_j) \until f^g)\} \rightarrow \{(\neg \mathcal{S}) \until \{s^h \wedge ((\neg a_j) \until (\mathcal{F})\}]$ specifies that the counter remains unchanged if decrement is applied to the $j$ when it is zero. 
		\item 	The formula $\psi^{dec}_1 = \wB [\{s^g \wedge ((\neg \mathcal{F}) \until (a_j))\} \rightarrow  (\neg \mathcal{F}) \until x.\{\mathsf{second-last}(a_j) \wedge \fut(x \in (0,1) \wedge (a_j \wedge \nex\nex ([A_{j+1} \vee \mathcal{F}] \wedge x \in (1,2))))\} ]$ decrements the counter $j$, if the present value of $j$ is non zero. It does that by disallowing copy of last $a_jb_j$ of the present interval to the next. 
			
			\end{itemize}

\end{enumerate}

The formula $\varphi_{\iecm}= \bigwedge \limits_{i \in \{1,\ldots,7\}} \varphi_i \wedge \bigwedge \limits_{p \in \Pp} \varphi^{p}_8$.

 \item To prove the undecidability we encode the k counter machine without error. Let the formula be $\varphi_{\mathsf{CM}}$. The encoding is same as above. The only difference is while copying the non-last $a$ in the $\varphi_{\cal M}$ we allowed insertion errors i.e. there were arbitrarily extra $a$ and $b$ allowed in between apart from the copied ones in the next configuration while copying the non-last $a$ and $b$.
 To encode counter machine without error we need to take care of insertion errors. 
 Rest of the formula are same. The following formula will avoid error and copy all the non-last $a$ and $b$ without any extra $a$ and $b$ inserted in between.
\begin{quote}
	$\varphi_{9} = \bigwedge \limits_{j \in C} \wB x.[(a_j \wedge \neg last(a_j)) \rightarrow \past(x-T \in (1,2) \wedge \nex(a_j \wedge x-T \in (0,1)))] \wedge 
	\\ ~~~~~~~~~~~~~~~~~~\wB x.[(b_k \wedge \neg last(b_j)) \rightarrow \past(x-T \in (1,2) \wedge \nex(b_j \wedge x-T \in (0,1)))]$
\end{quote}
Now,
$\varphi_{\mathsf{CM}} = \varphi_{\iecm} \wedge \varphi_{9}$

\end{enumerate}

\paragraph*{Correctness Argument}
 Note that incremental error occurs only while copying the non last $a_jb_j$ sequence. The similar argument for mapping $a_j$ with a unique $a_j$ in the next configuration can be applied in past. Note that $\varphi_7$ makes sure that for every non-last $(a_j,\tau_x) (b_j,\tau_{x+1})$ there is at least one $b_j$ with timestamp $\tau_y \in (\tau_x,\tau_{x+1})+1$. Thus, there exists an injective mapping from the $b_j$'s copied in the interval $[i+1,i+2)$ to consecutive pair of non-last $a_j,b_j$ in the interval $[i,i+1)$. 
 Note that the formulae $\varphi_9$ symmetrically maps every non-last $b_j$ in the interval $[i,i+1)$ to a pair of copied $a_j,b_j$ in the interval $[i+1, i+2)$. As there is an injection in both the directions, there is a bijection between all the non-last $a_j,b_j$ sequence in $[i,i+1)$ and set of copied $a_j,b_j$ sequence in $[i+1,i+2)$. Thus, $\varphi_9$ implies that there are no insertion errors.
\end{proof}

\section{Satisfiability Checking for Partially Adjacent 1-TPTL}

We start with the definition of non-adjacency proposed by the authors in \cite{KKMP21} \cite{KKMP23}. 
Given any interval $I \in \intinterval$, let $\sup (I)$ (respectively $\inf(I)$) be the supremum  (respectively infimum) of interval $I$. Let $I, I' \in \intinterval$ be any pair of intervals. We say that $I$ is adjacent to $I'$ iff $\sup(I) = \inf(I')$ implies $\sup(I) = 0$ and $\inf(I) = \sup(I')$ implies $\inf(I) = 0$. Intuitively, the \textbf{non-zero} supremum (respectively infimum) of one interval should not be the same as the non-zero infimum (respectively supremum) of the other interval.

Before defining the notion of Partial Adjacency, we introduce the notion of positive non-adjacency and negative non-adjacency. $I$ and $I'$ are said to be \textbf{positively non-adjacent} to each other iff $\sup(I) = \inf(I')$ implies $\sup(I) \le 0$ and $\inf(I) = \sup(I')$ implies $\inf(I) \le 0$. In other words, non-negative supremum (respectively infimum) of one interval should not be the same as the infimum (respectively supremum) of the other interval.
Similarly,  $I$ and $I'$ are said to be \textbf{negatively non-adjacent} to each other iff $\sup(I) = \inf(I')$ implies $\sup(I) \ge 0$ and $\inf(I) = \sup(I')$ implies $\inf(I) \ge 0$.

We say that a set of intervals $\mathcal{I} \subseteq \intinterval$ is non-adjacent, positively non-adjacent, and negatively non-adjacent iff for all $I,I' \in \mathcal{I}$ $I$ and $I'$ are non-adjacent, positively non-adjacent, and negative non-adjacent, respectively.

Notice that any punctual interval of the form $\langle x, x \rangle$ where $x \ne 0$ is not adjacent to itself. Hence, if a set of intervals $\mathcal{I}$ is non-adjacent, it will not contain any such punctual intervals. We now define Partially Adjacent TPTL.

\subsection{Partially Adjacent TPTL}
\label{sec:patptl}
Before defining Partially Adjacent TPTL, we first define the negation normal form for TPTL. In   this form negations are pushed inside and are only applicable to the propositional subformula.
$\tptl$ in the negation normal form is defined by the following grammar:
$$\varphi::=a~|~\neg a ~|\top~|~\bot~|~ x.\varphi ~|~ T-x \in I ~|~x-T \in I~|~\varphi \wedge \varphi~|~ \varphi \vee \varphi~|
~\varphi \until \varphi ~|~ \varphi \since \varphi ~|~ \Box \varphi ~|~ \boxminus \varphi,$$
where $x \in X$, $a \in \Sigma$, $I \in \intinterval$.
It is routine to show that any TPTL formula can be reduced to an equivalent formula in negation normal form with linear blow-up. Henceforth, without loss of generality, we will assume that a given TPTL formula is in negation normal form.

We now define the Non-Adjacent TPTL of \cite{KKMP21}, followed by the definition of Partially Adjacent TPTL.

\subsubsection{Non-Adjacent TPTL}
Given any $1$-TPTL formula $\varphi$ in negation normal form, we say that $\varphi$ is a non-adjacent TPTL $\natptl$ formula iff any pair of intervals used in the constraints appearing within the scope of the same freeze quantifier are non-adjacent. For example $x. (a \until (b \wedge T-x \in (1,2) \wedge T-x \in (3,4)))$ is a non-adjacent TPTL formula while, 
 $x. (a \until (b \wedge T-x \in (1,2) \wedge T-x \in (2,3)))$ and  $x. (a \until (b \wedge T-x \in [2,2]))$ are not non-adjacent TPTL formula.
 Similarly, $x. \{a \until \{b \wedge T-x \in [1,2] \wedge x. (c \until (d \wedge T-x \in [2,3])) \}\}$ is a non-adjacent TPTL formula in spite of intervals $[1,2]$ and $[2,3]$ being non-adjacent, because the non-adjacent intervals in these formula do not appear within the scope of the same freeze quantifier.

 \subsubsection{Partially-Adjacent TPTL}
 Similarly, given any $1$-TPTL formula $\varphi$ in negation normal form, we say that $\varphi$ is a positively (respectively negatively) non-adjacent TPTL $\pnatptl$ (respectively $\nnatptl$) formula iff any pair of intervals used in the constraints appearing within the scope of the same freeze quantifier are positively non-adjacent (respectively negatively non-adjacent). For example,  $x. \{a \until \{b \wedge T-x \in [1,2] \wedge (c \until (d \wedge T-x \in [2,3])) \}\}$ is an  $\nnatptl$ formula but not a $\pnatptl$ formula.  $x. \{a \until \{b \wedge T-x \in [1,2] \wedge T-x \in [2,3] \wedge  (c \since (d \wedge T- x \in [-1,-3])) \}\}$ is a $\nnatptl$ formula. Similarly, $x. \{a \until \{b \wedge T-x \in [1,2] \wedge (c \since (d \wedge T- x \in [-1,-3] \wedge T-x \in [0,-1])) \}\}$ is a $\pnatptl$ formula. Finally, $x. \{a \until \{b \wedge T-x \in [1,2] \wedge T-x \in [2,3] \wedge (c \since (d \wedge T- x \in [-1,-3] \wedge T-x \in [0,-1])) \}\}$ is neither $\pnatptl$ nor $\nnatptl$ formula. Notice that a formula $\varphi$ is both $\pnatptl$ and $\nnatptl$ iff $\varphi$ is a $\natptl$ formula. Hence, $\natptl$ is a syntactic subclass of both $\pnatptl$ and $\nnatptl$. It is also a strict subclass of both these logics as the latter can also express certain punctual constraints.
 It should also be straightforward that $\mtlpwuisnp$ is a strict syntactic subclass of $\pnatptl$ and $\mtlpwunpsi$ is a strict syntactic subclass of $\nnatptl$. 

 Any formula is said to be a \textbf{partially-adjacent TPTL} $\patptl$ iff it is either a $\pnatptl$ formula or $\nnatptl$ formula.

\subsection{Satisfiability Checking for Partially-Adjacent TPTL}
\label{sec:patptlsat}
This section is dedicated to the proof of our second contribution, i.e., the following theorem.
\begin{theorem}
\label{thm:patptl}
    Satisfiability Checking for Partially-Adjacent TPTL formulae is decidable over finite timed words.
\end{theorem}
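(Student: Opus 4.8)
The plan is to route the whole problem through an automaton-modality logic and finally reduce to emptiness of one-clock alternating timed automata over finite words, which is decidable by \cite{Ouaknine05, OuaknineW06}. Before anything else I would collapse the two halves of $\patptl$ into one. A finite timed word has a canonical time-reversal, and the syntactic mirror that swaps $\until \leftrightarrow \since$, $\fut \leftrightarrow \past$, $\Box \leftrightarrow \boxminus$ and negates each constraint interval $I \mapsto -I$ preserves satisfiability while carrying positively non-adjacent interval pairs to negatively non-adjacent ones and conversely. Hence satisfiability of $\pnatptl$ reduces to that of $\nnatptl$, and it suffices to give a decision procedure for $\nnatptl$: the fragment in which adjacency (and in particular punctuality) is tolerated only among the non-negative, i.e. future-directed, constraints, while every negative, i.e. past-directed, constraint pair under a common freeze is non-adjacent.

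The central step is a translation lemma: every $\nnatptl$ formula is equivalent over finite timed words to a formula of $\nnaem$, the fragment of $\pnregmtl$ whose future automata-modalities $\fregmk_I$ may carry arbitrary interval sequences but whose past automata-modalities are restricted to the non-adjacent form $\pregmk_{na}$. I would adapt the region decomposition of \cite{KKMP21, KKMP23}: fixing a freeze subformula $x.\psi$, the finitely many endpoints of the intervals constraining $x$ cut the time domain, relative to the point where $x$ is frozen, into boundedly many regions, and the truth of $\psi$ becomes a regular condition on the sequence of region-crossings, which I would capture by one future modality to the right of the freeze point and one past modality to its left. In the uniformly non-adjacent setting of \cite{KKMP21} it is precisely the separation of region boundaries that keeps both modalities non-adjacent; here I would instead argue that, because adjacency is confined to the future, coinciding boundaries on the right can be absorbed into an arbitrary-interval future modality $\fregmk_I$ (which is exactly what lets it encode punctual future guards), whereas the left-hand constraints remain negatively non-adjacent and therefore compile into a genuinely non-adjacent past modality $\pregmk_{na}$.

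I expect this translation to be the crux. The subtlety is that one freeze quantifier may simultaneously carry future and past constraints, and negative non-adjacency forbids coincidence of endpoints only at strictly negative values, so a future interval and a past interval are allowed to meet at $0$ and adjacent future intervals are allowed outright. The decomposition of $\psi$ must therefore be engineered so that every surviving adjacency lands on the future side: any leftover adjacency bleeding into the past modality would destroy the $\pregmk_{na}$ shape and, with it, the downstream decidability argument. Propagating this invariant through nested rebindings of $x$, and through the equisatisfiability-preserving oversampling needed to flatten nested automaton modalities, is where the bulk of the technical work will concentrate.

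It then remains to decide $\nnaem$ over finite timed words, and here the directional asymmetry is exactly what a single clock can exploit. Arbitrary, including punctual, future interval sequences are harmless: a lone clock reset at the anchor and tested against future guards is precisely the mechanism underlying the decidability of $\mtlpwui$ via one-clock alternating timed automata \cite{Ouaknine05, OuaknineW06}; it is past punctuality that would break the single-clock bookkeeping, and the $\pregmk_{na}$ restriction rules it out. Concretely I would first eliminate the non-adjacent past modalities by an equisatisfiability-preserving reduction that introduces oversampled witness propositions and re-expresses each past obligation as a future one, in the spirit of the $\mtlpwuisnp$ construction of \cite{time14}, leaving a purely future $\pnregmtl[\fregmk_I]$ formula, which I would then compile into a one-clock alternating timed automaton whose emptiness over finite words is decidable. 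Finally, strictness of $\mathsf{PMTL} \subsetneq \patptl$, and hence the claimed maximality among known boolean-closed decidable timed logics, would follow from the syntactic inclusions of the two halves of $\mathsf{PMTL}$ into the two halves of $\patptl$ together with a property expressible in $\nnatptl$ but in no $\mathsf{PMTL}$ formula.
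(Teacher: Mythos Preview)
Your plan is correct and follows the same architecture as the paper: reduce to $\nnatptl$ by mirror symmetry, translate into $\nnaem$ via the region decomposition of \cite{KKMP21,KKMP23} (the paper simply observes that the proof of Theorem~5.15 there already gives this stronger statement), then handle the future and past automata-modalities separately using the asymmetry. The one substantive divergence is in the endgame. The paper stays entirely in logic: it gives an explicit reduction of each $\fregmk^k_{I_1,\ldots,I_k}$ modality to a boolean combination of $\regm_I$ modalities (a new lemma, proved by splitting the run of each $\re_j$ at the interval boundaries $l_j,u_j$), and then invokes an existing $\regm\to\mtl[\until_I]$ reduction; the non-adjacent $\pregmk$ side is sent to $\emitl_{0,\infty}$ and thence to $\mitl$ via the construction in Section~8.3 of \cite{KKMP23}. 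The net target is an equisatisfiable $\mtlpwuisnp$ formula. You instead propose to compile the purely-future residue directly into a one-clock alternating timed automaton. That should work, since every constraint in an $\fregmk^k$ modality is measured from a single anchor point and the automaton-component can be simulated by alternation, but you would have to write out that compilation; the paper's route has the advantage of reusing off-the-shelf reductions and landing in a named logic whose decidability is already established in \cite{time14}. Your elimination of the past side ``in the spirit of \cite{time14}'' is also morally right, but note that the paper uses the more general non-adjacent-$\pregmk$ elimination of \cite{KKMP23} rather than the bare $\since_{np}$ case.
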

We show the decidability for $\nnatptl$ formula. The decidability proof for $\pnatptl$ is symmetric. The proof of the above theorem is via satisfiability preserving reduction to $\mtlpwuisnp$. 

Before we start with the proof, we first define the intermediate logics which are used in this reduction.

\subsubsection{MTL with Regular Expression Guarded Modalities, RatMTL}
We now define MTL extended with Regular Expression Guarded modalities from \cite{KKP17}. The modalities of these logics are in extensions of $\emitl$ of \cite{Wilke}. The latter used automata over subformulae rather than regular expressions.

Given a finite alphabet $\Sigma$,  the formulae of $\regmtl$ are built from following grammar:
$$\varphi ::\begin{array}{r}
     a (\in \Sigma)~|true~|\varphi \wedge \varphi~|~\neg \varphi~|
~\regm_I (\re \langle \varphi_1,\ldots,\varphi_n\rangle ) ~|~ \fregm_I (\re) (S) ~|~ \pregm_I (\re) (S)
\end{array}$$
\\ where $S$ is a finite set of subformulae generated using the above grammar, $I \in \intintervaln$ and $\re$ is a regular expression over subformulae in $S$. 
\\{\bf Semantics}: For any timed word $\rho = (a_1, \tau_1),(a_2, \tau_2) \ldots (a_n, \tau_n)$ let $x,y \in dom(\rho)$ $x<y$ and let $\Ss = \{\varphi_1,\ldots, \varphi_n\}$ be a given set of formulae ($\emitl, \regmtl,\mitl$ etc.). We define 
$\mathsf{Seg^+}(\rho,x,y,\Ss)$ (respectively $\mathsf{Seg^-}(\rho,x,y,\Ss)$)  as untimed words $(b_{x} b_{x+1} \ldots  b_{y})$ in $\Ss^*$ (respectively $(b_{y} b_{y-1} \ldots  b_{x})$ in $\Ss^*$) such that for any $x\le z \le y$, $b_z = S'$ iff $\rho,z$ satisfies all the formulae in $S'$ and none of the formulae in $\Ss \setminus S'$ (i.e. $\rho,z \models \bigwedge S' \wedge \neg \bigvee (S' \setminus \Ss)$). 

Similarly we define $\mathsf{TSeg}(\rho,i,I,\Ss)$ as $\mathsf{Seg^+}(\rho,x,y,\Ss)$ where $x > i$ and  
$\tau_x ,\tau_y \in I+\tau_i$, and either $y=|dom(\rho)|$ or  $\tau_{y+1} \notin I+\tau_i$, and when $I$ is of the form $[0,u\rangle$ then  $x = i+1$ else $\tau_{x-1}\notin I$. 
In other words, $\mathsf{TSeg}(\rho,i,I,\Ss)$ is an untimed word storing the truth values of   the formulae in $\Ss$ for the sub-word of $\rho$ within interval $I$ from point $i$. 

For example, consider a timed word $\rho = (a,0) (b,0.5) (b,0.95) (b, 1.9)$ and consider set of $\mitl$ formulae $\Ss = \{\fut_{(0,1)} b, \fut_{(1,2)} b\}$. In this case $\mathsf{Seg^+}(\rho, 1, 3, \Ss) = P[\Ss]P[\Ss] P[{\{\fut_{(0,1)} b\}}]$ as the first and the second point satisfies both the formulae in $\Ss$ but the third point satisfies only $\fut_{(0,1)} b$ as there is no point within time interval $(1,2)$ in its future. Similarly, $\mathsf{TSeg}(\rho,1,(0,1),\Ss) = \mathsf{Seg^+}(\rho, 2, 3, \Ss) = P[\Ss] P[\{\fut_{(0,1)} b\}]$.
 
Semantics for propositional formulae and boolean combinations are defined as usual. We just define the semantics of regular expression modalities:\\
1) $\rho,i \models \regm_I (\re)(S) $  $\leftrightarrow$  $[\mathsf{TSeg}(\rho, i, I, S)] \in L(\re)$\\
2) $\rho,i \models \fregm_I (\re)(S)$  $\leftrightarrow$  $\exists j \ge i, \tau_j - \tau_i \in I \wedge $ $[\mathsf{Seg^+}(\rho, i+1, j, S)] \in L(\mathcal{\re})$\\
3) $\rho,i \models \pregm_I (\re) (S)$  $\leftrightarrow$  $\exists j \le i, \tau_j - \tau_i \in I \wedge $ $[\mathsf{Seg^-}(\rho, i+1, j, S)] \in L(\mathcal{\re})$\\

Following are examples of some properties in $\regmtl$.
\begin{itemize}
\item {Example 1}: $\fregm_{(1,2)}((\{a\}.\{a\})^*.\{b\}) (\{a, b\})$, 
states that $a$ is true until the first $b$ in the future from the present point is within time interval $(1,2)$ and the number of points where $a$ is true before the first $b$ in the future is even.\\
\item Example 2: $\until_I$ can be easily defined using $\fregm_I$ modality. $$\varphi_1 \until_I \varphi_2 \equiv \freg_I ([\{\varphi_1\}+ \{\varphi_1, \varphi_2\})^*.(\{\varphi_2\}+ \{\varphi_1, \varphi_2\})]) (\{\varphi_1, \varphi_2\}).$$As $\until$ can be efficiently written in $\freg_I$, we don't include it in the syntax of $\regmtl$.
\item Example 3: $\regm_{[1,2]}((\{a\}.\{b\})^*) (a, b)$, specifies that the behaviour within interval $[1,2]$ from the present point, is such that $a$ holds at the first point within $[1,2]$, $b$ holds at the last point within $[1,2]$, and $a$ and $b$ alternately hold within $[1,2]$.
\end{itemize}

\begin{theorem} \cite{KKP17}\cite{khushraj-thesis}
    $\regm$ modality is more expressive than $\fregm$ modality.
\end{theorem}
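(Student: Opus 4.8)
The plan is to prove the statement in two parts: (i) that every property definable with the future modality $\fregm$ is also definable with $\regm$, and (ii) that there is a $\regm$-definable property that no $\fregm$-only formula can capture.

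For part (i), I would give a compositional translation. The key observation is that $\fregm_I(\re)(S)$ asserts the existence of an endpoint $j$ inside the window determined by $I$ such that the untimed segment $\mathsf{Seg^+}(\rho,i+1,j,S)$ lies in $L(\re)$; equivalently, some prefix of the full time-segment $\mathsf{TSeg}(\rho,i,I,S)$ lies in $L(\re)$ with that prefix ending inside $I$. Letting $U$ denote the regular expression matching every word over the letter-alphabet $2^S$, this is exactly the condition $[\mathsf{TSeg}(\rho,i,I,S)] \in L(\re \cdot U)$, i.e. $\regm_I(\re \cdot U)(S)$, once the boundary conventions in the definition of $\mathsf{TSeg}$ (the special treatment of intervals of the form $[0,u\rangle$ and the endpoint membership $\tau_x,\tau_y \in I + \tau_i$) are matched by adjusting $\re$ at its two ends. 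Since the subformulas in $S$ are translated recursively and both modalities range over the same kind of segment, the translation is purely structural and incurs only a linear blow-up, establishing that $\regm$ is at least as expressive as $\fregm$.

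For part (ii), I would exhibit the separating language using the ability of $\regm$ to constrain the \emph{last} point of a window rather than merely a prefix. A natural candidate is $\psi = \regm_{(0,1)}(U \cdot \{b\})(\{a,b\})$, asserting that the last point lying in $(0,1)$ from the present satisfies $b$ (more robustly, one may take the full-pattern property $\regm_{(0,1)}((\{a\}\{b\})^*)(\{a,b\})$ of Example 3, which pins the whole window). The intuition for why no $\fregm$-only formula defines $\psi$ is that $\fregm$ can only observe \emph{prefixes} of a window: the truth of any $\fregm_I(\re)(S)$ at a point depends only on which prefixes of the corresponding segment match $\re$, and is therefore insensitive to the content of the window strictly after the witnessing endpoint $j$. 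To turn this into a proof I would use an Ehrenfeucht--Fra\"iss\'e game tailored to the $\fregm$ fragment, in the style of the \spl--\dpl\ games used elsewhere in this work. I would build two families $\rho_n$ and $\rho'_n$, with $\rho_n \models \psi$ and $\rho'_n \not\models \psi$, that agree on all propositional letters except at the last point of the relevant window, and pad the window with $n$ indistinguishable $a,b$-pairs so that any endpoint $j$ that \spl\ selects in $\rho_n$ can be matched by \dpl\ with an endpoint in $\rho'_n$ yielding an identical segment up to round $k$ (and vice versa), the terminal $b$-versus-$a$ difference always lying beyond the chosen endpoint. A round-preservation invariant then shows \dpl\ wins the $k$-round game whenever $n$ exceeds the modal depth of the purported formula, contradicting definability of $\psi$.

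I expect the game argument of part (ii) to be the main obstacle. The difficulty is that $\fregm$ modalities carry \emph{arbitrary} regular expressions over recursively nested subformulas, so the \dpl\ strategy must be robust simultaneously to the choice of regular language (not a single fixed automaton) and to nesting: one must argue that the prefix-matching profiles seen at every nesting level, and not merely at the outermost modality, coincide on $\rho_n$ and $\rho'_n$. Formalizing a back-and-forth invariant strong enough to survive this nesting, while keeping the separating property as simple as $\psi$, is the delicate part of the proof.
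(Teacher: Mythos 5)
This theorem is stated in the paper only as a citation to \cite{KKP17} and the thesis; the paper itself contains no proof of it, so there is no in-paper argument to match yours against. The closest material is step (8) of Section \ref{proof:patptl}, where the containment direction is carried out for the more general $\fregmk^k$ modality, and comparing your part (i) against that construction exposes a genuine gap. Your proposed translation $\fregm_I(\mathsf{re})(S) \leadsto \regm_I(\mathsf{re}\cdot U)(S)$ conflates two different segments: $\fregm_I$ runs its expression on $\mathsf{Seg^+}(\rho,i{+}1,j,S)$, which begins at the point immediately after $i$, whereas $\regm_I$ runs on $\mathsf{TSeg}(\rho,i,I,S)$, which begins only at the first point inside the window $I+\tau_i$. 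When $\inf(I)=l>0$ the points with timestamps in $[\tau_i,\tau_i+l)$ are read by $\fregm_I$ but are invisible to $\regm_I$, and no ``adjustment of the expression at its two ends'' can recover them, since $\regm_I$ simply never sees those letters. The correct fix is not cosmetic: one must add a second modality $\regm_{[0,l)}$ to read the pre-window prefix, split the automaton at an intermediate state $q$, and disjoin over all such $q$ the conjunction $\regm_{[0,l)}(\mathsf{re}[init,q]) \wedge \regm_{I}(\mathsf{re}[q,F]\cdot U)$ --- exactly the state-splitting device the paper uses in step (8). As written, your part (i) is only valid for intervals of the form $[0,u\rangle$.

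Part (ii) is the right idea --- a ``last point of the window'' property such as $\regm_{(0,1)}(U\cdot\{b\})$ is indeed the kind of witness used in the cited works, and your observation that $\fregm$ truth depends only on which prefixes of a segment are accepted is the correct source of the separation --- but it remains a plan rather than a proof. The Ehrenfeucht--Fra\"{\i}ss\'e invariant you would need must be closed under arbitrary regular guards and under nesting of modalities at inner points (where the anchor of the timing constraint shifts), and you correctly identify this as the hard part without discharging it. So the proposal identifies both halves of the statement correctly, but part (i) contains a concrete error in the translation and part (ii) is incomplete; neither half is yet a proof.
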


As both NFA and regular expressions are equivalent in terms of expressiveness, we assume that all the arguments within $\fregmk$ and $\regm$ modalities are NFA rather than regular expressions.

\renewcommand{\re}{\mathsf{A}}

\subsubsection{Metric Temporal Logic Extend with Pnueli Automata Modalities \cite{KKMP21}\cite{KKMP23}}
We now define a recent logic introduced by \cite{KKMP21}\cite{KKMP23}. The logic is an extension of MTL with Automata modalities. It syntactically generalizes the automata modalities in EMITL of \cite{Wilke} and $\fregm_I$ modality of $\regmtl$ of \cite{KKP17}. Given a finite alphabet $\Sigma$, formulae of $\pnregmtl$ have the following syntax: \\
$\varphi::{=}a~|\varphi \wedge \varphi~|~\neg \varphi~| \fregmk^k_{I_1,\ldots,I_k} (\re_1,\ldots, \re_{k+1})(S)~|~ 
\pregmk^k_{I_1,\ldots,I_k} (\re_1,\ldots,\re_{k+1})(S)$,\\ where $a \in \Sigma$, 
$I_1, I_2, \ldots I_k \in \intintervaln$ and $\re_1, \ldots \re_{k+1}$ are automata over $2^S$ where $S$ is a set of formulae from $\pnregmtl$. $\fregm^k$ and $\pregm^k$ are the new modalities called future and past {\bf Pnueli Automata} Modalities, respectively, where $k$ is the arity of these modalities.

The satisfaction relation for $\rho,i_0$ satisfying a $\pnregmtl$ formula $\varphi$ is defined recursively as follows:
\begin{itemize}
\item $\rho,i_0{\models}\fregmk^k_{I_1, \ldots, I_k}(\re_1,\ldots,\re_{k+1})(S)$ iff 
 ${\exists} {i_0{ {\le} }i_1{\le} i_2 \ldots {\le} i_k {\le} n}$ s.t.\\ $\bigwedge \limits_{w{=}1}^{k}{[(\tau_{i_w} {-} \tau_{i_0} \in I_w)}
 \wedge \mathsf{Seg^+}(\rho, i_{w{-}1}+1, i_w, S) \in L({\re_w})] \wedge 
 \mathsf{Seg^+}(\rho, i_{k}+1, n,S) \in L({\re_{k+1}}) $,
\item $\rho,i_0 \models \pregmk^k_{I_1,I_2,\ldots,I_k} (\re_1,\ldots,\re_k, \re_{k+1})(S)$ iff  
${\exists} i_0 {\ge} i_1 {\ge} i_2 \ldots {\ge} i_k {\ge} 1$ s.t. \\
$\bigwedge \limits_{w{=}1}^{k}[(\tau_{i_0} {-} \tau_{i_w} \in I_w)
 \wedge \mathsf{Seg^{-}}(\rho, i_{w{-}1}-1, i_w,S) \in L({\re_{w}})] \wedge 
 \mathsf{Seg^{-}}(\rho, i_{k}-1, 1, S) \in L({\re_{k+1}})$.
\end{itemize}
Notice that we have a strict semantics for the above modalities as opposed to the weak ones of \cite{KKMP23}, for the sake of technical convenience. Both the semantics are equivalent. 
Refer to the figure \ref{fig:fregk} for semantics of $\fregmk^k$. Hence, $\fregmk^k$ and $\pregmk^k$ can be seen as a $k$-ary generalization of $\fregm$ and $\pregm$ modalities, respectively.

\begin{figure}

	\begin{tikzpicture}
\draw (-1,0)--(5,0);
\draw[dotted](5,0)--(7,0); 
\draw (6,0)--(11,0);

\node[fill = black,draw = black,circle,inner sep=1pt,label=below:{}] at (0,0){};


\node[fill = black,draw = black,circle,inner sep=1pt,label=below:{ $i$}] at (0,0){};
\node[fill = black,draw = black,circle,inner sep=1pt,label=below:{$i_1$}] at (2,0){};
\draw[|-|]  (0,-1)--(2,-1);
\node at (1,-0.75) {$\tau_{i_1} - \tau_i \in I_1$};
\node[fill = black,draw = black,circle,inner sep=1pt,label=below:{$i_2$}] at (4,0){};
\draw[|-|] (0,-1.5)--(4,-1.5);
\node at (2,-1.25) {$\tau_{i_2} - \tau_i \in I_2$};
\draw[dotted](0,-1.5)--(0,-2.5);

\node[fill = black,draw = black,circle,inner sep=1pt,label=below:{$i_{k-1}$}] at (7.5,0){};
\draw[|-|] (0,-2.5)--(7.5,-2.5);
\node at (3.75,-2.25) {$\tau_{i_{k-1}} - \tau_i \in I_{k-1}$};
\node[fill = black,draw = black,circle,inner sep=1pt,label=below:{$i_k$}] at (10,0){};
\draw[|-|] (0,-3)--(10,-3);
\node at (5,-2.75) {$\tau_{i_{k}} - \tau_i \in I_k$};

\draw[|-|,thick]  (0,0.5)--(2,0.5)[decorate, decoration=snake];
\node at (1,0.75){$\mathsf{A}_1$};
\draw[|-|, thick]  (2,0.5)--(4,0.5)[decorate, decoration=snake];
\node at (3,0.75){$\mathsf{A}_2$};
\draw[dotted]  (4,0.5)--(7.5,0.5);
\draw[|-|, thick]  (7.5,0.5)--(10,0.5)[decorate, decoration=snake];
\node at (8.75,0.75){$\mathsf{A}_k$};
\draw[|->, thick]  (10,0.5)--(11,0.5)[decorate, decoration=snake];
\node at (10.5,0.75){$\mathsf{A}_{k+1}$};
\end{tikzpicture}
\caption{Figure showing semantics of $\fregm^k_{I_1,\ldots,I_k}(\re_1, \re_2, \ldots, \re_{k},\re_{k+1})(S)$.}
\label{fig:fregk}
\end{figure}

\subsubsection{Partially Adjacent PnEMTL}
We say that a given $\pnregmtl$ formula is non-adjacent, (respectively positively non-adjacent, negatively non-adjacent) iff $\pnregmtl$ set of intervals within $I_1, \ldots I_k$ appearing associated with any (respectively $\fregmk$, $\pregmk$) modality is such  that $\{I_1, \ldots, I_k\}$ is non-adjacent.
We denote non-adjacent $\pnregmtl$ by NA-$\pnregmtl$, positively non-adjacent $\pnregmtl$ by $\pnaem$, and positively non-adjacent $\pnregmtl$ by $\nnaem$.

\subsection{Proof of Theorem \ref{thm:patptl}}
\label{proof:patptl}
This section is dedicated to the proof of Theorem \ref{thm:patptl}. The proof goes via a satisfiability preserving reduction from the given $\nnatptl$ formula $\varphi$ to $\mtlpwuisnp$ formula $\psi$.
The proof is as follows:
\\(1) We observe that the proof of Theorem 5.15 (Pg 9:23) of \cite{KKMP23} actually proves a more general version of Theorem 5.15 as follows.
\begin{theorem}{General version of Theorem 5.15 of \cite{KKMP23}}
\label{thm1}
    Given any simple 1-$\tptl$ formula $\varphi$ in negation normal form using the set of intervals $I_\nu$, we can construct an equivalent $\pnregmtl$ formula $\alpha$ such that $|\alpha|{=}O(2^{Poly(|\varphi|)})$ and arity of $\alpha$ is at most $2|I_\nu|^2 + 1$. Moreover, any interval $\langle l,u \rangle$ appears within the modality $\fregmk$ in $\alpha$ only if (1) either $l=0$ or there is an interval $\langle l,u' \rangle  \in I_\nu$ and, (2) either $u = 0$ or there is an interval $ \langle l', u \rangle \in I_\nu$. Similarly, any interval $\langle l,u \rangle$ appears within the modality $\pregmk$ in $\alpha$ only if (1) either $l=0$ or there is an interval $\langle l',-l \rangle  \in I_\nu$ and, (2) either $u = 0$ or there is an interval $ \langle -u, u' \rangle \in I_\nu$. Hence, if $\varphi$ is positively (respectively negatively) non-adjacent $\tptl$ formula then $\alpha$ is a $\pnaem$ (respectively $\nnaem$) formula.
\end{theorem}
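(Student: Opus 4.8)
The plan is to read this as a refinement of Theorem~5.15 of \cite{KKMP23} rather than as a new construction. The equivalence of $\alpha$ with $\varphi$, the size bound $|\alpha| = O(2^{Poly(|\varphi|)})$, and the arity bound $2|I_\nu|^2 + 1$ are already delivered by that theorem, so I would cite them unchanged. The only genuinely new content is the ``Moreover'' clause constraining which intervals can occur inside the $\fregmk$ and $\pregmk$ modalities of $\alpha$, together with its corollary about non-adjacency. Accordingly I would split the argument into two parts: (i) a bookkeeping inspection of the existing construction establishing the endpoint conditions (1)--(2) for $\fregmk$ and their negated analogues for $\pregmk$; and (ii) a short endpoint case-analysis deriving non-adjacency of $\alpha$ from non-adjacency of $I_\nu$.

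For part (i), I would revisit where timing guards are synthesised in the construction. The construction decomposes the future (respectively past) of the freeze point into regions delimited by the endpoints of the intervals of $I_\nu$, and the guard $\tau_{i_w} - \tau_{i_0} \in I_w$ of each $\fregmk$ block is obtained by intersecting original constraint intervals $T - x \in I$ (with $I \in I_\nu$) with one another and with $[0,\infty)$, the last because a future modality observes only non-negative offsets. The key elementary fact is that an intersection $\langle l, u \rangle = I \cap I'$ of two $I_\nu$-intervals has $\inf$ equal to a lower endpoint of $I$ or $I'$ and $\sup$ equal to an upper endpoint of $I$ or $I'$ (and this persists for any multiple intersection), while further intersecting with $[0,\infty)$ can only replace the lower endpoint by $0$. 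This is exactly condition (1) ($l=0$ or $l$ is a lower endpoint of some $I_\nu$-interval) together with condition (2) ($u=0$ or $u$ is an upper endpoint of some $I_\nu$-interval). For $\pregmk$ the same analysis applies after reflection: a constraint $T - x \in I$ with $I \subseteq (-\infty,0]$ contributes to a past block the guard $\tau_{i_0} - \tau_{i_w} \in -I$, so lower and upper endpoints exchange sign, yielding precisely the $\langle l', -l\rangle$ and $\langle -u, u'\rangle$ conditions stated for $\pregmk$.

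For part (ii), recall that every $\pnregmtl$ guard lives in $\intintervaln$ and is therefore a subset of $[0,\infty)$. Assume $\varphi$ is positively non-adjacent, so $I_\nu$ is positively non-adjacent, and take two intervals $J = \langle l_1, u_1\rangle$ and $J' = \langle l_2, u_2\rangle$ occurring in a common $\fregmk$ modality of $\alpha$. If $\sup(J) = \inf(J')$, that is $u_1 = l_2$, and $u_1 > 0$, then by condition (2) there is $\langle l', u_1\rangle \in I_\nu$ and by condition (1) there is $\langle l_2, u''\rangle \in I_\nu$; these witness $\sup(\langle l', u_1\rangle) = \inf(\langle l_2, u''\rangle) = u_1 > 0$, contradicting positive non-adjacency of $I_\nu$. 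Hence $u_1 = 0$, and the case $\inf(J) = \sup(J')$ is symmetric. Thus adjacency among the $\fregmk$-intervals can occur only at $0$, i.e.\ these interval sets are (fully) non-adjacent, which is exactly what $\pnaem$ requires; the $\pregmk$-intervals are left unconstrained, as permitted. The negatively non-adjacent case is entirely symmetric: under negative non-adjacency of $I_\nu$, the reflection from part (i) forces adjacency among the $\pregmk$-intervals to occur only at $0$, giving an $\nnaem$ formula.

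The step I expect to be the main obstacle is part (i). The arity $2|I_\nu|^2 + 1$ signals that the construction reasons about ordered pairs of intervals from $I_\nu$, so the guards in $\alpha$ are not literally the intervals of $I_\nu$ but are assembled from their endpoints. The delicate point is that the guards must be the (intersected, possibly $[0,\infty)$-clipped) constraint intervals themselves and never a raw region delimited by an upper endpoint of one interval and a lower endpoint of another --- the latter would route an upper endpoint into a lower slot and break condition (1). I therefore have to examine every site where a guard is introduced and confirm that no step performs such a mix, and that the only fresh endpoint ever created is $0$. This is pure bookkeeping over an already-existing construction, but it is where all the care resides: once the endpoint conditions are secured, part (ii) is the two-line case split above.
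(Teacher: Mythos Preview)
Your proposal is correct and matches the paper's own treatment: the paper itself offers no argument beyond the single sentence ``We observe that the proof of Theorem 5.15 (Pg 9:23) of \cite{KKMP23} actually proves a more general version,'' so the intended proof is precisely the bookkeeping inspection you outline in part~(i), and the adjacency consequence you spell out in part~(ii) is the straightforward unpacking of the theorem's final sentence. If anything, you are being more explicit than the paper --- your identification of the ``delicate point'' (that guards must inherit lower endpoints only from lower endpoints of $I_\nu$ and upper from upper, never a region bounded by a mixed pair) is exactly the content that has to be checked inside the cited construction, and your endpoint case-split in part~(ii) is sound.
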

\noindent (2) The above generalization then implies a general version of Theorem 5.16 of \cite{KKMP23} as follows.
\begin{theorem}
\label{thm2}
Any (respectively non-adjacent, positively non-adjacent, negatively non-adjacent) 1-$\tptl$ formula $\varphi$ with intervals in $I_\nu$, can be reduced to an equivalent (respectively, non-adjacent, positively non-adjacent, negatively non-adjacent) $\pnregmtl$, $\alpha$, with $|\alpha|= 2^{Poly(|\varphi|)}$ and arity of $\alpha{=}O(|I_\nu|^2)$ such that 
$\rho,i \models \varphi$ iff $\rho,i \models \alpha$.
\end{theorem}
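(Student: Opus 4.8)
The plan is to obtain Theorem~\ref{thm2} from Theorem~\ref{thm1} by induction on the number of freeze-quantifier nestings in $\varphi$, reducing the general case to the simple case handled by Theorem~\ref{thm1}. Recall that Theorem~\ref{thm1} applies to a \emph{simple} $1$-$\tptl$ formula, i.e.\ one of the form $x.\psi$ where $\psi$ contains no further freeze quantifier, so that every clock constraint $T-x \in I$ in $\psi$ refers to the single binding of $x$. I would process $\varphi$ bottom-up: pick an innermost subformula $x.\psi$ (one whose scope $\psi$ contains no freeze quantifier), note that it is simple, and apply Theorem~\ref{thm1} to replace it by an equivalent $\pnregmtl$ formula $\alpha_\psi$. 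Since $\alpha_\psi$ is a closed (freeze-free) $\pnregmtl$ formula equivalent to $x.\psi$ at every point, it can be substituted for that occurrence of $x.\psi$ and henceforth treated as a fresh atomic proposition of the surrounding formula. This strictly decreases the nesting depth, and I recurse until no freeze quantifier remains, at which point the whole formula is a $\pnregmtl$ formula $\alpha$.

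Correctness of each step is immediate from Theorem~\ref{thm1}: the replaced subformula $x.\psi$ rebinds $x$ and therefore denotes a point-property independent of the incoming valuation of $x$, so replacing it by the equivalent point-property $\alpha_\psi$ leaves the semantics of every outer modality unchanged. For the non-adjacency bookkeeping I would invoke the ``Moreover'' clause of Theorem~\ref{thm1}: every interval occurring in a $\fregmk$ (resp.\ $\pregmk$) modality of $\alpha_\psi$ is forced to arise from the set of intervals that occurred under the eliminated freeze quantifier. Consequently, if all intervals under that freeze are positively (resp.\ negatively) non-adjacent, then the corresponding modality is non-adjacent, so $\alpha_\psi$ is $\pnaem$ (resp.\ $\nnaem$); and if $\varphi$ is non-adjacent, both directions hold at once. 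Because non-adjacency of a $\pnregmtl$ formula is a per-modality condition and each modality of $\alpha$ is created by a single application of Theorem~\ref{thm1}, no interaction between distinct freeze scopes can break the property, and the classification of $\varphi$ transfers verbatim to $\alpha$.

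The step I expect to be the real obstacle is the size bound, since a naive nesting of an exponential translation would produce a tower of exponentials rather than the claimed single exponential $2^{Poly(|\varphi|)}$. The resolution is sharing: the inner formulas $\alpha_\psi$ are not re-expanded at the outer level but placed into the finite subformula set $S$ attached to the enclosing modality, where they appear only as individual letters of the alphabet $2^S$ driving the automata. Hence the fresh cost introduced by each invocation of Theorem~\ref{thm1} is just the automaton blow-up $2^{Poly(|\varphi|)}$ of that invocation, measured in the structural size of the local simple formula with its atoms counted as unit symbols. As the number of freeze quantifiers, and hence of invocations of Theorem~\ref{thm1}, is at most $|\varphi|$, summing the fresh contributions over all of them in the shared (DAG) representation gives $|\varphi| \cdot 2^{Poly(|\varphi|)} = 2^{Poly(|\varphi|)}$, absorbing the polynomial factor into the exponent. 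The arity bound is then purely local: the arity of any modality in $\alpha$ equals the arity it was assigned upon creation, which by Theorem~\ref{thm1} is at most $2|I_\nu|^2 + 1 = O(|I_\nu|^2)$, since every interval set arising under a freeze quantifier is a subset of $I_\nu$.
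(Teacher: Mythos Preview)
Your proposal is correct and matches the paper's approach: the paper does not spell out a proof of Theorem~\ref{thm2} at all, but simply states that it follows from Theorem~\ref{thm1} exactly as Theorem~5.16 of \cite{KKMP23} follows from Theorem~5.15 there, and the inductive elimination of innermost freeze quantifiers you describe (substituting the resulting closed $\pnregmtl$ subformula as an atom, with sharing via the subformula set $S$ to keep the blow-up singly exponential) is precisely that argument. Your per-modality bookkeeping for the (positive/negative) non-adjacency transfer is also exactly what the ``Moreover'' clause of Theorem~\ref{thm1} is designed to give.
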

\noindent (3) Hence, by theorem \ref{thm2}, we can reduce the given $\nnatptl$ formula $\varphi$ into an equivalent $\nnaem$ formula.
\\(4) Applying the flattening reduction in section 8.2 of \cite{KKMP23}, we get an equisatisfiable formula 
$$\alpha' = \bigvee_i \Box^{ns}(b_i \leftrightarrow [(\bigvee \Sigma) \wedge \beta_i)]$$
where each $\beta_i$ is either a propositional formula, or a formula of the form $\pregmk^k_{I_1, \ldots I_k}(\re_1, \ldots, \re_k)$, or a formula $\fregmk^k_{I_1, \ldots I_k}(\re_1, \ldots, \re_k)$ where each $\re_1, \ldots, \re_k$ is an automata over fresh variables $b_i$. Moreover every $\pregmk^k_{I_1, \ldots I_k}(\re_1, \ldots, \re_k)$ is such that $\{I_1, \ldots, I_k\}$ is a non-adjacent set of intervals.
\\(5) It suffices to reduce these individual conjunct of the form $\Box^{ns}(b_i \leftrightarrow [(\bigvee \Sigma) \wedge \beta_i)]$ to an equivalent $\mtlpwuisnp$ modulo oversampled projection (See Appendix \ref{app:oversampling} for details).
\\(6) By construction in Section 8.3 of \cite{KKMP23}, we can reduce each conjunct of the form $\Box^{ns}(b_i \leftrightarrow [(\bigvee \Sigma) \wedge \beta_i)]$ where $\beta_i$ is a $\pregmk^k$ formula to an equisatisfiable $\emitl_{0,infty}$ formula modulo oversampling. Notice this is possible because all the $\pregmk$ formula are non-adjacent $\pnregmtl$ formula. By lemma 6.3.3 of \cite{khushraj-thesis}, these $\emitl_{0,infty}$ formulae can be converted to an equisatisfiable $\mitl$ formula modulo oversampling.
\\(7) To handle conjunct of the $\Box^{ns}(b_i \leftrightarrow [(\bigvee \Sigma) \wedge \beta_i)]$ where $\beta_i$ is a $\fregmk^k$ formula, we reduce each $\fregmk$ formula to an equivalent $\regm$ formula. By Lemma 6.3.2 of \cite{khushraj-thesis}, these $\regm$ formulae can be reduced to an equisatisfiable $\mtl[\until_I]$ formula modulo oversampling.
\\(8) Thus, it suffices to show that any $\fregmk^k_{I_1, \ldots I_k} (\re_1, \ldots, \re_{k+1})$ can be reduced to an equivalent $\regm$ formula. We drop the second argument $(S)$ from $\regm$ and $\fregmk$ modalities wherever it is clear from context. We assume without loss of generality that $\sup(I_j) \le \inf(I_{j+1})$.

\newcommand{\Seg}{\mathsf{Seg}}
Notice that $\rho, i_0 \models \fregmk^k_{I_1, \ldots I_k} (\re_1, \ldots \re_{k+1})(S)$ iff there exists $i_0 \le i_1 \le i_2 \ldots \le i_k \le i_{k+1} = |\rho|$ such that (A) $\Seg^+(\rho, i_{j-1}+1,i_j, S) \in L(\re_j)$, for $1 \le j \le k+1$. 
Let each $I_j$ be of the form $[ l_j, u_j ]$ \footnote{For the sake of simplicity we assume all the intervals are close intervals. The proof can be generalized to open and half closed intervals, similarly.}. Let $u_{0} = 0$.

Let $i''_{j}$ be the last point within $[u_{j-1},l_j)$ and $i'_{j+1}$ be the last point within interval $[l_j, u_j]$ from the point $i_{0}$. Let $init_j$ be the initial state of $\re_j$. And $F_j$ be the set of accepting states of $\re_j$. Finally, let $\re_j[q,q']$ define an automaton with the same transition relation as $\re_j$ but with the initial state as $q$ and accepting state as $q'$. Similarly, let $\re_j[q, F_j]$ define an automaton with the same transition relation and set of accepting states as $\re_j$ but with the initial state $q$. Let (B1) $q''_1$ be any state of $\re_1$ which is reachable after reading $\Seg^+(\rho, i_{0}+1, i''_1, S)$ starting from the initial state $init_j$. (B2) Let $q_1$ be any state reachable by $\re_1$ on reading 
$\Seg^+(\rho, i''_{1}+1, i_1, S)$ starting from $q_1''$. 
(B3) Let $q_2'$ be any state reachable by $\re_2$ on reading $\Seg^+(\rho, i_{1}+1, i'_2, S)$ starting from initial state.  (B4) Let $q_2''$ be any state reachable by $\re_2$ from after reading $\Seg^+(\rho, i'_{2}+1, i''_2, S)$ starting from the state $q_2'$. (B5) Let $q_2$ be any state reachable by $\re_2$ reading $\Seg^+(\rho, i''_{2}+1, i_3, S)$ starting from the state $q''_2$, and so on. 

Notice that $q_j$  has to be a final state of $\re_j$ iff (A) holds. 

By the definition of $\regm$ modality, the fact that $q''_1$ is reachable from  after reading the word $\Seg^+(\rho, i_{0}+1, i''_1, S)$ starting from initial state of $\re_1$ can be expressed by $\rho, i_0 \models \varphi'_1(q''_1)$, where $\varphi''_1(init_1, q''_1) = \regm_{[0,l_1)}(\re_1[init_1,q''_1])$. Similarly, the fact that there exists a point $i_1 \ge i_0$ such that  starting from state $q''_1$ in $\re_1$ on reading $\Seg^+(\rho, i''_{1}+1, i_1, S)$ we reach some accepting state in $\re_1$,  and $\re_2$ reaches a state $q'_2$ on reading  $\Seg^+(\rho, i_{1}+1, i_2', S)$ can be expressed as $\rho, i_0 \models \varphi_1$, where $\varphi_1(q''_1, q'_2) = \regm_{[l_1,u_1]}(\re_1[q''_1, F_1].\re_2[init_2, q'_2])$. Similarly, the fact that $\re_2$ reaches a state $q_2''$ starting from $q'_2$ on reading $\Seg^+(\rho, i'_{2}+1, i_2'', S)$ can be encoded by $\rho, i_0 \models \varphi''_2(q'_2, q''_2)$ where $\varphi''_2 = \regm_{(u_1,l_2)}(\re_2[q'_2, q''_2])$, and so on. Notice that there are finitely many possible values for $q''_j$ and $q'_j$. Disjuncting over all possible values for $q'_j$ and $q''_j$, we can express $\rho, i_0 \models \fregmk^k_{I_1, \ldots I_k} (\re_1, \ldots \re_{k+1})(S)$ equivalently by the condition $\rho, i_0 \models \psi_f$ where 

$$\psi_f = \bigvee \limits_{\forall 1\le j\le k, q''_j, q'_j \in Q_j}
\left\{\begin{array}{c}\varphi''_1(init_1, q''_1) \wedge \varphi_1(q''_1, q'_2) \wedge \varphi''_2(q'_2, q''_2) \wedge \varphi_2(q''_2, q'_3) \ldots \wedge \varphi''_{k}(q'_k, q''_k)  \\ \wedge \\\varphi_k( q''_k, q'_{k+1}) \wedge \varphi''_{k+1}(q'_{k+1})
\end{array}\right\}
$$

where $Q_j$ is set of all the states of NFA $\re_j$, and $\varphi''_j$s and $\varphi_j$ are defined as above for $1\le j \le k$. Moreover, $\varphi''_{k+1}(q'_{k+1}) = \regm_{(u_k, \infty)} (\re_{k+1}[q'_{k+1}, F_{k+1}])$. Hence, we get the required reduction. Please refer to figure \ref{fig:fregtorat} for intuition.
\\(9) We now show that $\regm$ modality can also be expressed as $\fregmk^k$ modality for $k=3$. While this is not part of the main proof, this is an interesting observation which along with the reduction in the previous step (8) and applying structural induction on these input formulae proves the following theorem.
\begin{theorem}
    $\fregmk^k$ modality can be expressed as boolean combinations of $\regm$ modality, and vice-versa. Hence, both these modalities are expressively equivalent.
\end{theorem}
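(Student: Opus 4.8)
The plan is to prove the two inclusions separately and then close under Boolean combinations by structural induction. The direction ``$\fregmk^k$ is expressible as a Boolean combination of $\regm$'' has in fact already been established: the formula $\psi_f$ built in step~(8) is exactly such a Boolean combination equivalent to a single $\fregmk^k$ modality, so nothing new is needed there. It therefore remains only to express a single $\regm$ modality using the $\fregmk^k$ modalities with $k=3$. Once the two basic modalities are mutually translatable, the full expressive equivalence follows by induction on formula structure: one replaces each occurrence of the outermost modality by its translation, which is legitimate because both logics are closed under the Boolean connectives and the argument automata range over the same set $S$ of subformulae, so the inductive hypothesis applies to the formulae in $S$.

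For the remaining direction, fix $\regm_{[l,u]}(\re)(S)$ and a point $i_0$. By monotonicity of timestamps, the ``window'' of points whose relative timestamp lies in $[l,u]$ is a single contiguous block $[x,y]$, and $\rho, i_0 \models \regm_{[l,u]}(\re)(S)$ iff $\mathsf{Seg}^+(\rho,x,y,S) \in L(\re)$. The idea is to use the three anchors of a $\fregmk^3$ modality to cut the suffix $[i_0{+}1,n]$ into three blocks, \emph{before-window}, \emph{window}, and \emph{after-window}, by attaching to the anchors the interval constraints $[0,l)$, $[l,u]$ and $(u,\infty)$ respectively. The outer automata are set to accept everything, while the automaton responsible for the window block is set to $\re$; since a cutting anchor may land strictly inside the window, the run of $\re$ must be \emph{threaded} across the window sub-segments precisely as in step~(8), namely by disjuncting over the intermediate states that $\re$ passes through at the cutting anchors. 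Open and half-open guards, the degenerate case $l=0$ (where the window starts at $i_0{+}1$ and the before-block is empty), and the case of an empty window are handled by the obvious minor variants.

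The crux, and the step I expect to be the main obstacle, is forcing the anchors to sit \emph{exactly} at the window boundaries so that $\re$ reads neither too few nor too many points. A segment automaton observes only the truth values of the formulae in $S$ and is completely blind to timestamps, so the sole timing information available is the interval constraint on each anchor, and such a constraint pins down the timestamp of the anchor itself but says nothing about its immediate neighbour. To pin the right boundary I would place one anchor in $[l,u]$ and the next in $(u,\infty)$ and force the intervening segment to be a single letter, making the two anchors consecutive; the earlier one is then provably the last in-window point $y$. The left boundary is pinned symmetrically, by forcing a before-window anchor (guard $[0,l)$) and the first window anchor (guard $[l,u]$) to be consecutive. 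The delicate accounting is to make these pinning anchors \emph{double} as the delimiters across which $\re$'s run is threaded, and to verify by case analysis that any anchor choice which over- or under-shoots either boundary fails to drive the constructed automata into an accepting configuration; this is exactly what is required for the harder ``only if'' direction of the equivalence, where we must rule out spurious witnesses. Having established both directions for a single modality, the expressive equivalence of the two logics follows by the structural induction described in the first paragraph.
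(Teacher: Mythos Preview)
Your proposal is essentially correct and follows the same route as the paper. For the direction $\fregmk^k \Rightarrow \regm$ you rightly defer to step~(8). For $\regm \Rightarrow \fregmk$, the paper does exactly what you describe: it places anchors with guards $[0,l)$, $[l,u]$, $[l,u]$, $(u,\infty)$, forces the $[0,l)$--$[l,u]$ pair and the $[l,u]$--$(u,\infty)$ pair to be consecutive by single-letter segment automata, and threads $\re$ across the window. The only cosmetic difference is that the paper threads by disjuncting over the first and last \emph{letters} $\varphi',\psi'$ of the window (using left/right quotients $\re',\re''$), whereas you speak of disjuncting over intermediate \emph{states} as in step~(8); the two devices are interchangeable here.

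One small point worth tightening: your initial sketch speaks of ``the three anchors of a $\fregmk^3$ modality'' cutting the suffix into before-window, window, after-window, but the pinning mechanism you then describe actually introduces \emph{two} anchors inside $[l,u]$ (the first and the last window point), giving four anchors and hence $\fregmk^4$ rather than $\fregmk^3$. The paper's displayed formula has the same arity glitch (it writes $\fregmk^3$ while listing four intervals $[0,l),[l,u],[l,u],(u,\infty)$), so you are in good company; but when you write this up, be explicit that the general case needs arity~$4$, with the $\fregmk^3$ (or lower) variants covering the degenerate situations where the before-window block is empty (your case $l=0$) or the after-window block is empty (no point beyond $u$), which neither you nor the paper spells out fully.
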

We again show the result for timing intervals of the form $[l,u]$, and the reduction can be generalized for other type of intervals. Notice that $\rho, i \models \regm){[l,u]}(\re)(S)$ iff (C1) either there exists a point $i_1''$ which is the last within interval $[0,l)$ from $i$, and there exists a point $i_2'$ which is the last point within interval $[l,uj]$ from $i$ such that $\Seg^+(\rho, i_1''+1, i_2', S) \in L(\re)$ (C2) or there is no point within   $[0,l)$ from $i$, in which case $i+1$ ( the next point from $i$) is within interval $[l,u]$ from $i$, and there exists a point $i_2'$ which is the last point within interval $[l,uj]$ from $i$ such that $\Seg^+(\rho, i+1, i_2', S) \in L(\re)$. Let $X^*$ be an NFA over $S$ accepting the words over $S$. Let $\re''(\psi')$ be the right quotient of $\re$ with respect to $\psi'$. Let $\re'(\varphi', \psi')$ be the left quotient of $\re''(\psi')$ with respect to $\varphi'$.

Notice that (C1) is equivalent to 
$$\bigvee \limits_{\varphi',\psi'}\fregmk^3_{[0,l),[l,u],[l,u](u,\infty)}(X^*.\varphi',\re',\psi',X^*)$$

Similarly (C2) is equivalent to 
$$\Box_{[0,l)}(false)\wedge \bigvee \limits_{\psi' \in S }\fregmk^3_{[l,u],[l,u](u,\infty)}(\re'',\psi',X^*)$$

\begin{figure}
    \centering
    \includegraphics[width=1\linewidth]{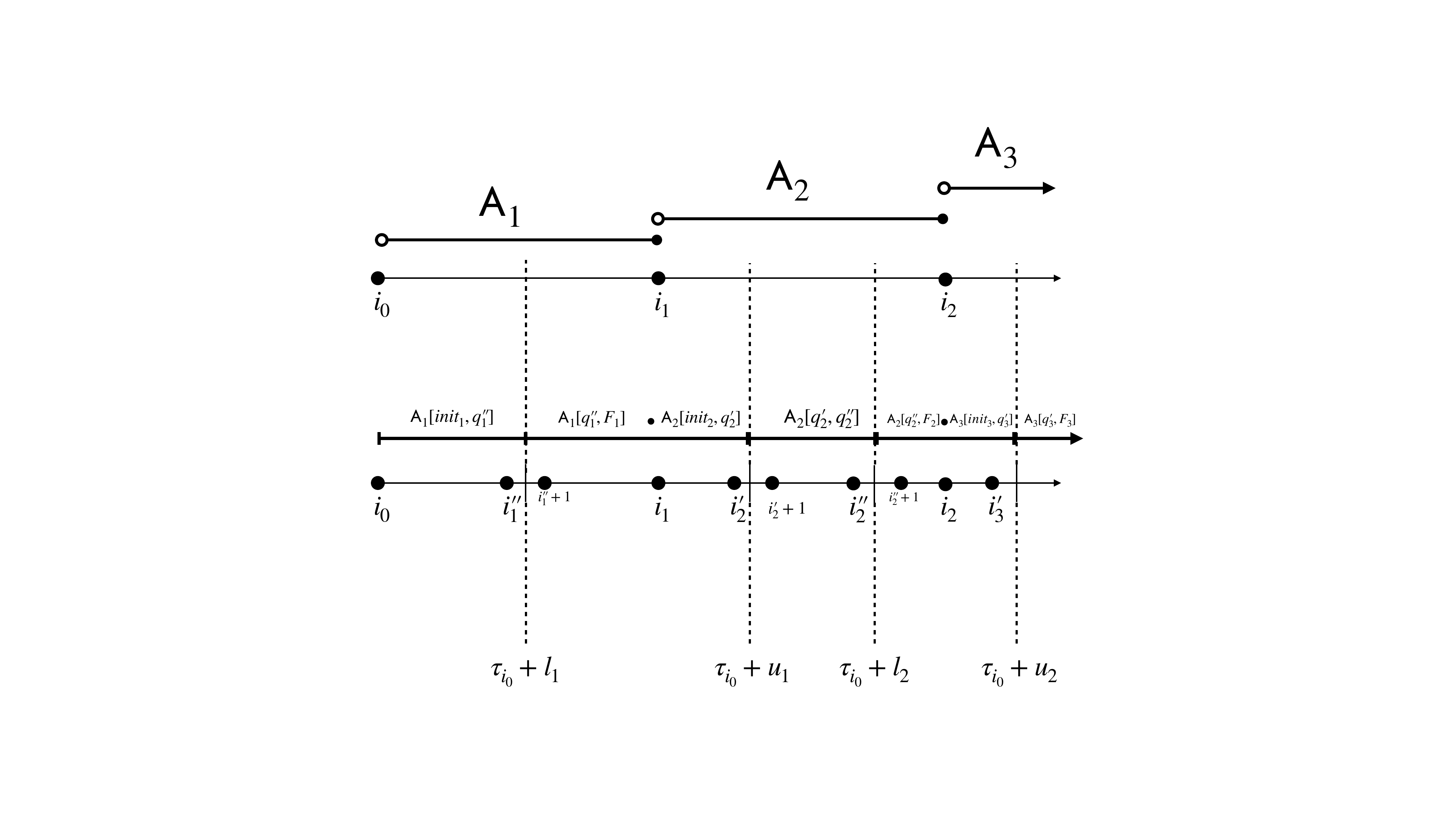}
    \caption{Figure showing the reduction of $\fregmk^2_{[l_1,u_1],[l_2,u_2]}(\re_1, \re_2, \re_3)(S)$ to formulae using only $\regm_I$ modalities.}
    \label{fig:fregtorat}
\end{figure}

\subsection{Expressiveness of $\patptl$}
In this section we briefly discuss the expressiveness of $\patptl$ vis-\`{a}-vis PMTL in the following theorem.
\begin{theorem}
    Partially Adjacent 1-$\tptl$ is strictly more expressive than Partially Punctual Metric Temporal Logic.
\end{theorem}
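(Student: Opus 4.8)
The plan is to prove this in two parts. First I would establish that $\patptl$ is \emph{at least as expressive} as PMTL, and then show that the containment is \emph{strict} by exhibiting a concrete property expressible in $\patptl$ but not in PMTL.

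For the first part, recall from the earlier remarks in Section~\ref{sec:patptl} that $\mtlpwuisnp$ is a strict syntactic subclass of $\pnatptl$ and $\mtlpwunpsi$ is a strict syntactic subclass of $\nnatptl$. Since PMTL is by definition the union $\mtlpwuisnp \cup \mtlpwunpsi$, and $\patptl$ is precisely the union $\pnatptl \cup \nnatptl$, every PMTL formula lies in $\patptl$ up to the obvious translation of $\until_I,\since_I$ into freeze-quantified form. Thus $\patptl$ is at least as expressive as PMTL; this step is essentially a syntactic embedding and should be routine, the only care needed being to confirm that the translation of a (non-)punctual MTL modality into a TPTL constraint $x.(\ldots T{-}x\in I\ldots)$ genuinely preserves the positive/negative non-adjacency side-condition in the required direction.

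For strictness, I would exhibit a single $\nnatptl$ (or $\pnatptl$) formula that uses a freeze variable to correlate two timing constraints across \emph{nested} temporal modalities in a way no PMTL formula can. The natural candidate is a formula asserting, from the present point, the existence of a future point $a$ and, within a bounded window from that \emph{same frozen reference}, a later $b$ and $c$ with two interval constraints measured against the original freeze point --- exploiting that a single freeze variable in TPTL can compare timestamps of several points to one common anchor, whereas MTL (and hence PMTL) can only chain relative constraints. The known strictness of one-variable TPTL over MTL (cited in the introduction) provides the template; the task here is to pick a witness whose guards satisfy negative non-adjacency so that it actually lies in $\patptl$.

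The hard part will be the lower-bound (inexpressibility) argument: showing that the chosen property has no equivalent PMTL formula. I would carry this out via an Ehrenfeucht--Fra\"iss\'e / \dpl--\spl{} game argument in the style used to separate MTL from one-variable TPTL, constructing two families of timed words $\rho_n,\rho_n'$ that the witness formula distinguishes but that are indistinguishable by any PMTL formula of modal depth and interval-constant bounded by $n$. The delicacy is that PMTL permits punctual guards in \emph{one} direction, so the two word families must be designed so that neither the non-punctual future (or past) guards nor the punctual past (or future) guards of PMTL can detect the difference, while the frozen-variable correlation in $\patptl$ can. Establishing that \dpl{} has a winning strategy against all such PMTL formulae --- i.e. the careful bookkeeping of which points remain matchable under both punctual and non-punctual moves in the permitted direction --- is where the real work lies.
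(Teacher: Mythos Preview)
Your containment direction matches the paper's approach. For strictness, however, you are making the argument much harder than it needs to be. The paper's proof is essentially two lines: it exhibits the witness
\[
\psi \;=\; \fut\bigl[a \wedge x.\{\fut(b \wedge T{-}x \in (1,2) \wedge \fut(c \wedge T{-}x \in (1,2)))\}\bigr],
\]
observes that this is non-adjacent (hence partially adjacent) 1-$\tptl$, and then simply \emph{cites} the known result (Pandya--Simoni, Rabinovich) that $\psi$ is not expressible in \emph{full} $\mtl$. Since PMTL is a syntactic fragment of $\mtl$, inexpressibility in PMTL is immediate.

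Your proposed EF-game argument, tailored to defeat PMTL's one-sided punctuality, is unnecessary: the cited inexpressibility already holds against $\mtl$ with punctual guards in \emph{both} directions, so it holds a fortiori against PMTL. Your worry that ``the two word families must be designed so that neither the non-punctual future (or past) guards nor the punctual past (or future) guards of PMTL can detect the difference'' dissolves once you notice that the literature already gives you the stronger statement. Your plan would work, but it reproves from scratch a result that is available off the shelf; the paper's route is both shorter and cleaner.
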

\begin{proof}
    Notice that any PMTL formula can be easily expressed in $\patptl$ formula using standard reduction from $\mtl$ to 1-$\tptl$. Moreover, by \cite{pandyasimoni} and \cite{rabinovichY}, $\mtl$ (and hence PMTL) can not express the following property which is clearly a non-adjacent (and hence also a partially adjacent 1-$\tptl$ formula).
$$\psi = \fut [a\wedge x.\{\fut (b \wedge T-x \in (1,2) \wedge \fut( c\wedge T-x \in (1,2)))\}]$$
\end{proof}
\begin{theorem}
    Partially Adjacent 1-$\tptl$ is strictly more expressive than Non-Adjacent 1-$\tptl$.
\end{theorem}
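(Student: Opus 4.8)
The plan is to prove the two halves of the strict inclusion separately. For $\natptl \subseteq \patptl$ I would simply invoke the observation recorded in Section~\ref{sec:patptl}: a formula is simultaneously $\pnatptl$ and $\nnatptl$ precisely when it is $\natptl$, so $\natptl$ is a syntactic subclass of $\nnatptl \subseteq \patptl$. Every $\natptl$ formula is therefore literally a $\patptl$ formula defining the same language, which gives the inclusion for free.

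For strictness I would exhibit a single \emph{punctual} property and show that it separates the two logics. The singleton interval set $\{[2,2]\}$ is negatively non-adjacent, since the only pair to check is $([2,2],[2,2])$ and both implications hold because $2 \ge 0$. Hence $\phi = \fut\, x.\big(a \wedge \fut(b \wedge T-x \in [2,2])\big)$ is a legal $\nnatptl$ formula, and thus a $\patptl$ formula; it asserts that some $a$ is followed exactly two time units later by a $b$. Note also that $\{[2,2]\}$ is \emph{not} non-adjacent (that would force $2=0$), so $\phi$ is not even syntactically an $\natptl$ formula. It therefore suffices to prove that $L(\phi)$ is not definable by any $\natptl$ formula.

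To prove this non-definability I would move to the automaton-modal setting, where the structural effect of non-adjacency is transparent. By Theorem~\ref{thm2} every $\natptl$ formula is equivalent, position by position, to a non-adjacent $\pnregmtl$ formula $\alpha$, so it is enough to show that no non-adjacent $\pnregmtl$ formula defines $L(\phi)$. The key structural fact I would extract is that if the interval set of $\alpha$ is non-adjacent then \emph{all} intervals of $\alpha$ touching a fixed nonzero value (here $2$) touch it from the same side: any interval with supremum $2$ is adjacent to any interval with infimum $2$, because $\sup = \inf = 2 \neq 0$, so $\alpha$ can never close in on the value $2$ from both below and above. This one-sidedness is exactly the obstruction preventing a non-adjacent formula from pinning an exact nonzero distance, and it is the semantic content of the classical inexpressibility of punctuality in $\mitl$, now lifted to the non-adjacent fragment.

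The main obstacle, and the substantive part of the proof, is turning this one-sidedness into a genuine inexpressibility argument. A naive single-timestamp perturbation does not suffice: if $\alpha$ uses both a half-open and a closed interval approaching $2$ from the same side, then neither moving the critical $b$ up nor down leaves all of $\alpha$'s guards unchanged. I would instead set up an Ehrenfeucht--Fra{\"i}ss{\'e}-style game for non-adjacent $\pnregmtl$, in which the positions are pairs of timed words, Spoiler chooses anchor points for the $\fregmk^k$/$\pregmk^k$ modalities together with states along the witnessing automata $\re_j$, and Duplicator must match the induced segment-words and interval memberships. I expect the non-adjacency gap around the value $2$ to furnish Duplicator with a winning strategy on two word families $\rho_n, \rho'_n$ that agree on every non-adjacent guard but differ in whether the critical distance is \emph{exactly} $2$, whence $\rho_n \models \phi$ while $\rho'_n \not\models \phi$. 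If the inexpressibility of punctuality for non-adjacent $\pnregmtl$ (equivalently $\natptl$) is already available from \cite{KKMP21,KKMP23}, this step collapses to a citation; otherwise the game above is where the real work lies.
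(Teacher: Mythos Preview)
Your inclusion argument and your choice of separating witness are exactly what the paper uses: a punctual formula such as your $\phi$ lies in $\nnatptl\subseteq\patptl$ but expresses an exact nonzero distance, and $\natptl$ cannot. The paper's proof, however, is a two-line remark: it simply asserts that $\natptl$ ``[is] not able to express punctual constraints'' while $\patptl$ can express them in one direction, and stops there. It neither sets up an EF-game nor passes through $\pnregmtl$; the inexpressibility of punctuality in $\natptl$ is treated as a known or self-evident consequence of the non-adjacency restriction (no nonzero singleton interval survives, and no pair of intervals can pinch a nonzero value from both sides).

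So your route is not wrong, but it is far more elaborate than what the paper actually does, and the substantive part of your plan---the EF-game for non-adjacent $\pnregmtl$---is left as a sketch. If you intend this as a self-contained proof, that game is a genuine gap: you have correctly identified the structural one-sidedness around the value $2$, but you have not exhibited the word families $\rho_n,\rho'_n$ nor Duplicator's strategy, and you yourself flag that ``the real work lies'' there. If instead you are content to match the paper's level of rigour, then your first two paragraphs already suffice and the third can be replaced by the same one-line appeal the paper makes (or, more carefully, by citing the $\natptl\to\text{NA-}\pnregmtl\to\emitl_{0,\infty}$ chain from \cite{KKMP23} together with the classical inexpressibility of punctuality in non-punctual logics). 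Either way, your reduction via Theorem~\ref{thm2} to non-adjacent $\pnregmtl$ is a reasonable intermediate step, but it is not the route the paper takes.
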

\begin{proof}
    This is a straightforward consequence of the latter not being able to express punctual constraints, while former can still express punctual constraints in either future or past but not both directions.
\end{proof}

This makes $\patptl$ the most expressive decidable logic over finite timed words known in the literature to the best of our knowledge.

\section{Conclusion}
Inspired by the notion of non-punctuality of \cite{AFH96}\cite{AlurFH91} and partial punctuality of \cite{time14}, we propose a notion of openness and partial adjacency, respectively. Openness is a stricter restriction as compared to non-punctuality. We show that openness does not make the satisfiability checking of 1-$\tptl$ (and $\tptl$) computationally easier, proving that $1$-$\tptl$ (and $\tptl$) does not enjoy the benefits of relaxing punctuality unlike $\mtl$. This makes a strong case for the notion of non-adjacency proposed for $1$-$\tptl$ by \cite{KKMP21}\cite{KKMP23}. We then propose a notion of partial adjacency, generalizing non-adjacency of \cite{KKMP21}. We study the fragment of 1-$\tptl$ with partial adjacency. This fragment allows adjacency in either future direction or past direction, but not both. We show that satisfiability checking for this fragment is decidable over finite timed words. The non-primitive recursive hardness for satisfiability checking of this fragment is inherited by its subclass 1-$\tptl[\until]$ and $\mtl[\until_I]$. We show that this fragment is strictly more expressive than Partially Punctual Metric Temporal Logic of \cite{time14}. This makes this logic one of the most expressive boolean closed real-time logic for which satisfiability checking is decidable over finite timed words, to the best of our knowledge. In the future, we plan to study the first-order fragment corresponding to this logic.

\newpage

\bibliographystyle{plain}
\bibliography{papers}

\newpage




\appendix
\section {Equisatisfiability Modulo Projections}
We denote the set of all the finite timed words over $\Sigma$ is denoted by $\tap \Sigma^*$.
\subsection{Simple Projections}
A  simple-$(\Sigma,X)$-behaviour is a timed word $\rho' \in \tap (\Sigma \cup X)^*$ such that for all $i \in dom(\rho')$ $\rho',i\models \bigvee \Sigma$. In other words, at any point $i \in dom(\rho')$, $\rho'[i]=(\sigma'_i,\tau'_i)$ and $\sigma'_i \cap \Sigma \ne \emptyset$. 

\noindent{\it \underline {Simple Projections}}: Consider a simple-$(\Sigma,X)$-behaviour $\rho'$. For $i \in dom(\rho')$ let $\rho'[i] = (\sigma'_i,\tau'_i)$. We define the {\it simple projection} of $\rho'$ with respect to $X$, denoted $\rho' \downarrow X$, as the word $\rho$, such that $dom(\rho) = dom(\rho')$ and for all $i \in dom(\rho)$ $\rho[i] = (\sigma'_i \setminus X, \tau'_i)$.

As an example, for $\Sigma=\{a,b\}, X=\{c,d\}$,  
$\rho' = (\{a,d\},0)(\{b,c\},0.3)(\{a,b,d\},1.1)$ is a simple-$(\Sigma,X)$-behaviour, while $\rho''=(\{a\},0)(\{c,d\},0.3)(\{b,d\},1.1)$ is not as, for  $i=2$, $\{c,d\} \cap \Sigma = \emptyset$. Moreover, $\rho' \setminus X = \rho =  (\{a\},0)(\{b\},0.3)(\{a,b\},1.1)$.\\
\noindent{\it{\underline{Equisatisfiability modulo Simple Projections}}}: 
Given any temporal logic formulae $\phi$ and $\psi$, let $\phi$ be interpreted over $\tap \Sigma^*$ and $\psi$ over $\tap (\Sigma\cup X)^*$. We say that $\phi$ is equisatisfiable to $\psi$ 
{\it modulo simple projections} iff $\Sigma,X$ are disjoint sets and,  
\begin{enumerate}
\item For any timed word $\rho' \in \tap (\Sigma \cup X)^*$, 
     $\rho' \models \psi$ implies that $\rho'$ is a simple-$(\Sigma,X)$-behaviour and $\rho' \downarrow X \models \phi,$
\item For any timed word $\rho\in \tap \Sigma^*$, $\rho \models \phi$ implies that there exists a timed word $\rho' \in \tap (\Sigma \cup X)^*$ such that
 $\rho' \downarrow X = \rho$ and $\rho' \models \psi$.
\end{enumerate}
We denote by $\psi \downarrow X \equiv_{\Sigma} \phi$, 
the fact that $\phi$ is equisatisfiable to $\psi$ modulo simple projections. 
Intuitively, $\psi \downarrow X \equiv_\Sigma \phi$ states that every model of $\phi$ is represented by some extended model of $\psi$ and every model
of $\psi$, when projected over the original alphabets, gives a model of $\phi$. Thus we get the following proposition.
\begin{proposition}
\label{simple-prop-1}
If $\phi$ is interpreted over $\tap \Sigma^*$, $\psi$ over $\tap(\Sigma\cup X)^*$ and $\psi \downarrow X \equiv_\Sigma \phi$ then $\phi$ is satisfiable if and only if $\psi$ is satisfiable.
\end{proposition}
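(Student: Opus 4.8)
The plan is to prove both directions of the biconditional by directly unfolding the two clauses in the definition of equisatisfiability modulo simple projections. The statement $\psi \downarrow X \equiv_\Sigma \phi$ is designed precisely to transport a witnessing model in each direction, so the proof amounts to substituting the definitions and performing one small sanity check on the projected labels.

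First I would show that satisfiability of $\phi$ implies satisfiability of $\psi$. Suppose $\phi$ is satisfiable, so there is some $\rho \in \tap \Sigma^*$ with $\rho \models \phi$. Clause (2) of the definition of $\psi \downarrow X \equiv_\Sigma \phi$ then immediately supplies a timed word $\rho' \in \tap(\Sigma \cup X)^*$ with $\rho' \downarrow X = \rho$ and $\rho' \models \psi$, and this $\rho'$ witnesses that $\psi$ is satisfiable. For the converse, suppose $\psi$ is satisfiable, witnessed by some $\rho' \in \tap(\Sigma \cup X)^*$ with $\rho' \models \psi$. Clause (1) guarantees both that $\rho'$ is a simple-$(\Sigma,X)$-behaviour and that $\rho' \downarrow X \models \phi$; setting $\rho = \rho' \downarrow X$ yields a model of $\phi$, so $\phi$ is satisfiable.

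The only point needing a brief verification is that $\rho = \rho' \downarrow X$ is genuinely a timed word over $\Sigma$, i.e. that $\rho \in \tap \Sigma^*$, so that it is a legitimate model for $\phi$. Since $\Sigma$ and $X$ are disjoint, at each position $i$ we have $\sigma'_i \setminus X = \sigma'_i \cap \Sigma$, and the simple-behaviour property $\rho', i \models \bigvee \Sigma$ forces $\sigma'_i \cap \Sigma \neq \emptyset$; hence every projected label lies in $\Gamma = 2^{\Sigma} \setminus \emptyset$, and as the timestamps are inherited unchanged the monotonicity and $\tau_1 = 0$ constraints are preserved. Thus $\rho \in \tap \Sigma^*$ as required.

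I do not expect any real obstacle here: the entire content of the proposition is packaged into the two clauses of the definition of $\equiv_\Sigma$, and the result is essentially the observation that those clauses together constitute a satisfiability equivalence, with the disjointness of $\Sigma$ and $X$ and the simple-behaviour condition ensuring that projection never destroys a valid model.
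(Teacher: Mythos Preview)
Your proposal is correct and matches the paper's approach: the paper treats this proposition as an immediate consequence of the definition of equisatisfiability modulo simple projections (it is stated with no proof, preceded only by ``Thus we get the following proposition''), and your argument is exactly the straightforward unfolding of the two clauses of that definition. Your additional sanity check that $\rho' \downarrow X \in \tap\Sigma^*$ is a welcome bit of care the paper leaves implicit.
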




Consider a formula $\psi = \wB(\bigvee X \rightarrow \bigvee \Sigma)$. If $\psi$ is interpreted over $\tap (\Sigma \cup X)^*$ then all the words satisfying $\psi$  are simple-$(\Sigma, X)$-behaviour. But the same $\psi$ is interpreted over $\tap (\Sigma, X')^*$ where $X' \supset X$, then $\psi$ will allow models where at certain points $\neg \bigvee \Sigma$ is true. 
Consider yet another formula $\phi = \fut_{(0,1)}(\fut_{(0,1)} a) \wedge \neg \fut_{(0,1)} a$. If $\phi$ is interpreted over $\tap \Sigma^*$ where $\Sigma= \{a\}$, then $\phi$ is unsatisfiable as there are no point with timestamp in $(0,1)$ (because of $\neg \fut_{(0,1)} a$) where 
subformulae $\fut_{(0,1)} a$ at depth 2 could have been asserted). But the same formulae when asserted over $\tap (\{a,b\})^*$ has a model $\rho = (a,0), (b,0.5), (a,1.1)$. Thus the satisfaction of a formulae is sensitive to the set of models it is interpreted over.

One way to make sure that the satisfaction of any formulae $\phi$ remains insensitive to the set of models it is interpreted over, is to restrict the language of the formulae, $\phi$ to contain only simple-$(\Sigma,X)$-behaviours for any $X$ irrespective of what propositions it is interpreted over. This could be done by conjuncting $\phi$ with $\wB(\bigvee \Sigma)$. 
To be more precise, the following proposition holds:
\begin{proposition}
\label{simple-prop-2}
For any $X$ such that $X$ and $\Sigma$ are disjoint set of propositions, if $\phi$ is interpreted over $\tap \Sigma^*$ and $\phi \wedge \wB(\bigvee \Sigma)$ over $\tap (\Sigma \cup X)^*$ then $L(\phi \wedge \wB(\bigvee \Sigma)) = \{\rho' | \rho' \downarrow X \models \phi\}$. 
\end{proposition}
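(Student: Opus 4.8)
The plan is to prove the stated set equality $L(\phi \wedge \wB(\bigvee \Sigma)) = \{\rho' \mid \rho' \downarrow X \models \phi\}$ by establishing the two inclusions, both of which reduce to a single \emph{projection-invariance lemma}: for every simple-$(\Sigma,X)$-behaviour $\rho' \in \tap(\Sigma \cup X)^*$, every position $i \in dom(\rho')$, and the formula $\phi$ over $\Sigma$,
$$\rho', i \models \phi \iff \rho' \downarrow X, i \models \phi,$$
where the left-hand satisfaction is taken over $\tap(\Sigma \cup X)^*$ and the right-hand one over $\tap \Sigma^*$. The intuition is that $\phi$ mentions atomic propositions only from $\Sigma$, while by definition $\rho' \downarrow X$ discards exactly the $X$-labels and keeps $dom(\rho') = dom(\rho' \downarrow X)$ together with all timestamps; hence $\rho'$ and $\rho' \downarrow X$ agree on the truth of every $\Sigma$-proposition at every point and share the same temporal skeleton.

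First I would prove the lemma by structural induction on $\phi$, for all positions $i$ simultaneously (so that the modal cases have a usable induction hypothesis). In the base case $\phi = a$ with $a \in \Sigma$, disjointness of $\Sigma$ and $X$ gives $a \notin X$, so $a \in \sigma'_i \iff a \in \sigma'_i \setminus X$, which is exactly the claim; the case $\phi = \true$ holds because the two words share a domain. The Boolean cases are immediate from the induction hypothesis. For the modal cases $\until_I$ and $\since_I$ (and their derived operators), the existential witness $j$ and the intermediate positions $k$ range over $dom(\rho') = dom(\rho' \downarrow X)$, and the metric side-conditions $t_j - t_i \in I$ are literally the same on both words because projection leaves timestamps untouched; applying the induction hypothesis to the immediate subformulae then closes these cases.

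With the lemma available, the two inclusions follow quickly, and the conjunct $\wB(\bigvee \Sigma)$ is precisely what glues the simple-behaviour side-condition to the projection. For $\subseteq$: if $\rho' \models \phi \wedge \wB(\bigvee \Sigma)$, then $\wB(\bigvee \Sigma)$ forces $\rho', i \models \bigvee \Sigma$ at every $i$, i.e. $\rho'$ is a simple-$(\Sigma,X)$-behaviour, so $\rho' \downarrow X \in \tap \Sigma^*$ is a legitimate word; the lemma applied at position $1$ then yields $\rho' \downarrow X \models \phi$. For $\supseteq$: if $\rho' \downarrow X \models \phi$, then $\rho' \downarrow X$ is a valid word in $\tap \Sigma^*$, which (since labels in $\Gamma$ are nonempty) means every point of $\rho'$ still carries a $\Sigma$-proposition after $X$ is removed, so $\rho'$ is a simple-$(\Sigma,X)$-behaviour and $\rho' \models \wB(\bigvee \Sigma)$; the lemma then gives $\rho' \models \phi$, whence $\rho' \models \phi \wedge \wB(\bigvee \Sigma)$.

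The one delicate point --- and the main obstacle worth spelling out --- is the implicit well-definedness hidden in the right-hand set: the expression $\rho' \downarrow X \models \phi$ presupposes that $\rho' \downarrow X$ is a genuine element of $\tap \Sigma^*$, which fails exactly when some point of $\rho'$ carries only $X$-labels, yielding an empty $\Sigma$-label forbidden in $\Gamma$. It is precisely this corner case that $\wB(\bigvee \Sigma)$ excludes, so the real work is confirming that the simple-behaviour condition holds on both sides of the equality in lockstep; once that is pinned down, the remainder is the routine structural induction above.
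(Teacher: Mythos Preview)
The paper does not actually supply a proof of this proposition; it is stated as a self-evident fact in the appendix, sandwiched between other propositions, and only Proposition~\ref{simple-prop-3} receives an explicit argument. Your proposal is correct and is exactly the natural way to justify the claim: the projection-invariance lemma you isolate (that $\rho',i\models\phi$ iff $\rho'\downarrow X,i\models\phi$ for simple-$(\Sigma,X)$-behaviours and $\phi$ over $\Sigma$) is the heart of the matter, and the structural induction you sketch goes through because simple projection preserves the domain, the timestamps, and the $\Sigma$-component of every label. Your handling of the well-definedness issue is also on point: the right-hand set is only meaningful when $\rho'\downarrow X$ lands in $\tap\Sigma^*$, which by the paper's convention $\Gamma=2^\Sigma\setminus\{\emptyset\}$ forces $\rho'$ to be a simple-$(\Sigma,X)$-behaviour, and this is precisely what $\wB(\bigvee\Sigma)$ captures on the left.

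One minor remark: your induction is phrased for $\mtl$ modalities $\until_I,\since_I$, whereas the proposition is stated generically for ``temporal logic formulae''. The same argument extends uniformly to the other logics in the paper ($\tptl$ with freeze, $\regmtl$, $\pnregmtl$), since in every case the semantics at a point depends only on the position set, the timestamps, and the truth of $\Sigma$-atoms at each position --- none of which are disturbed by erasing $X$-labels. You may want to say this in one sentence rather than leaving it implicit.
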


 

\begin{proposition}
\label{simple-prop-3}
Let $\phi_1$ and $\phi_2$ be interpreted over $\tap \Sigma^*$. If $\psi_1 \downarrow X_1 \equiv_\Sigma  \varphi_1$ (i) and $\psi_2 \downarrow X_2 \equiv_\Sigma \varphi_2$(ii) and $X_1 \cap X_2 = \emptyset$ then $[\psi_1 \wedge \psi_2 \wedge \wB(\bigvee \Sigma)] \downarrow (X_1 \cup X_2) \equiv_\Sigma \varphi_1 \wedge \varphi_2$.
\end{proposition}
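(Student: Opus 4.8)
The plan is to verify the two clauses in the definition of equisatisfiability modulo simple projections directly, after first isolating a \emph{projection-insensitivity} observation. Write $\psi = \psi_1 \wedge \psi_2 \wedge \wB(\bigvee \Sigma)$, $X = X_1 \cup X_2$, and $\phi = \varphi_1 \wedge \varphi_2$, interpreting $\psi$ over $\tap(\Sigma \cup X)^*$ and $\phi$ over $\tap\Sigma^*$. The crucial lemma I would establish is that, because $\psi_i$ mentions only propositions in $\Sigma \cup X_i$ and a simple projection preserves the domain, the timestamps, and the $(\Sigma \cup X_i)$-content of every position, the truth value of $\psi_i$ is unchanged when the disjoint propositions of $X_{3-i}$ are projected away. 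Concretely, for $\rho' \in \tap(\Sigma \cup X)^*$ one has $\rho', j \models \psi_1$ iff $(\rho' \downarrow X_2), j \models \psi_1$ (and symmetrically for $\psi_2$), proved by a routine structural induction over the freeze and temporal modalities; the essential point is that \emph{projection neither adds nor deletes positions}, so the quantification ranges of the modalities are identical on both words. This is exactly why projection behaves well where arbitrary extension does not, as illustrated by the example preceding Proposition~\ref{simple-prop-2}.

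For the first clause, I would take any $\rho' \in \tap(\Sigma\cup X)^*$ with $\rho' \models \psi$. The conjunct $\wB(\bigvee\Sigma)$ immediately forces $\rho'$ to be a simple-$(\Sigma, X)$-behaviour. From $\rho' \models \psi_1$ and the projection-insensitivity lemma, $\rho'\downarrow X_2 \models \psi_1$; hypothesis (i) then gives that $\rho'\downarrow X_2$ is a simple-$(\Sigma,X_1)$-behaviour and $(\rho'\downarrow X_2)\downarrow X_1 \models \varphi_1$. Since $X_1 \cap X_2 = \emptyset$, iterated projection collapses, $(\rho'\downarrow X_2)\downarrow X_1 = \rho'\downarrow X$, so $\rho'\downarrow X \models \varphi_1$. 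The symmetric argument using (ii) yields $\rho'\downarrow X \models \varphi_2$, hence $\rho'\downarrow X \models \phi$.

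For the second clause, I would take $\rho \in \tap\Sigma^*$ with $\rho \models \varphi_1 \wedge \varphi_2$. Hypotheses (i) and (ii) supply witnesses $\rho_1' \in \tap(\Sigma\cup X_1)^*$ and $\rho_2' \in \tap(\Sigma\cup X_2)^*$ with $\rho_i' \downarrow X_i = \rho$ and $\rho_i' \models \psi_i$. The one genuinely constructive step is to \emph{merge} these witnesses: since both project to $\rho$, they share the domain, timestamps, and $\Sigma$-labelling of $\rho$, differing only on their $X_1$- and $X_2$-labels; as $X_1 \cap X_2 = \emptyset$ I can define $\rho'$ over $\Sigma \cup X$ on this common domain by labelling each position with the union of the labels of $\rho_1'$ and $\rho_2'$ there. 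Then $\rho' \downarrow X_2 = \rho_1'$, $\rho'\downarrow X_1 = \rho_2'$, and $\rho'\downarrow X = \rho$. Applying projection-insensitivity again gives $\rho' \models \psi_1$ and $\rho' \models \psi_2$, while every position still carries the nonempty $\Sigma$-label of $\rho$, so $\rho' \models \wB(\bigvee\Sigma)$; altogether $\rho' \models \psi$, as required. Proposition~\ref{simple-prop-1} then packages the two clauses into the claimed $\psi \downarrow X \equiv_\Sigma \phi$.

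I expect the main obstacle to be the projection-insensitivity lemma rather than the bookkeeping: one must be careful that it holds precisely because projection fixes the set of positions (it would fail for arbitrary extension), and that the modalities genuinely inspect only propositions inside $\Sigma \cup X_i$. The merge in the second clause is otherwise elementary, its correctness resting entirely on the disjointness $X_1 \cap X_2 = \emptyset$ together with both witnesses agreeing on their common $\Sigma$-reduct $\rho$.
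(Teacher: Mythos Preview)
Your proposal is correct and follows essentially the same approach as the paper. Your explicit ``projection-insensitivity'' lemma is precisely the content the paper packages as Proposition~\ref{simple-prop-2}, and your merge construction in the second clause is exactly the union-of-labels word the paper builds in its footnote; if anything, your argument is more complete, since the paper only details the forward direction and dismisses the converse with ``the other direction can be proved similarly.''
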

\begin{proof}
Consider any word $\rho \in \tap \Sigma^*$ such that $\rho \models \varphi_1 \wedge \varphi_2$. We show that there exists a simple-$(\Sigma,X_1\cup X_2)$-behaviour $\rho'$ such  that $\rho'$ models $\psi_1\wedge\psi_2 \wedge\wB(\bigvee \Sigma)$ and $\rho' \downarrow X_1\cup X_2 = \rho$.
Then by (i) there exists a simple-$(\Sigma,X_1)$-behaviour $\rho_1$ such that $\rho_1 \models \psi_1$ and $\rho_1 \downarrow X_1 =\rho$. Similarly, by (ii) there exists a simple-$(\Sigma,X_2)$-behaviour $\rho_2$ such that $\rho_2 \models \psi_2$ and $\rho_2 \downarrow X_2 = \rho$. Note that by proposition \ref{simple-prop-2}
all the words in $R_1 = \{\rho'_1 | \rho'_1 \downarrow X_2 = \rho_1\}$ satisfy $R_1 \wedge \wB(\bigvee \Sigma)$. Similarly, all the words in $R_2 = \{\rho'_1 | \rho'_1 \downarrow X_2 = \rho_1\}$ satisfy $\psi_2 \wedge \wB(\bigvee \Sigma)$. Note that for any word $\rho' \in R_1 \cap R_2$ 
$\rho' \downarrow (X_1 \cup X_2) = \rho$. Also note that the set $R_1 \cap R_2$ is non-empty.\footnote{Let $\rho'_1 = (\sigma'1,\tau)$, $\rho'_2 = (\sigma'2, \tau)$ and $\rho = (\sigma,\tau)$. Consider the word $\rho' = (\sigma',\tau)$ over $\tap(\Sigma \cup X_1 \cup X_2)^*$ such that $dom(\rho) = dom(\rho')$ and for any $i \in dom(\rho)$, $\sigma'_i = \sigma'1_i \cup \sigma'2_i$. Then $\rho' \in R_1 \cap R_2$}. Hence for every word $\rho$ if $\rho \models \varphi_1 \wedge \varphi_2$ then there exists a $\rho'$ such that $\rho'$ is a simple-$(\Sigma, (X_1,\cup X_2))$-behaviour, $\rho' \models \psi_1 \wedge \psi_2 \wedge \wB(\bigvee \Sigma)$ and $\rho' \downarrow (X_1 \cup X_2) = \rho$. The other direction can be proved similarly.
\end{proof}
\subsubsection{Flattening: An Example for Simple Projections.}

\label{sec-flat}
Let $\varphi \in \mtlpwuisi$ be a formula interpreted over  $\tap \Sigma^*$. 
Given any sub-formula $\psi_i$ of $\varphi$, and a fresh symbol $b_i \notin \Sigma$, $T_i=\wB(\psi_i \leftrightarrow b_i)$ is called a {\it temporal definition} and  $b_i$ is called a {\it witness}.  Let $\psi=\varphi[b_i /\psi_i]$ be the formula obtained by replacing all occurrences of $\psi_i$ in $\varphi$, with the witness $b_i$.    
This process of flattening is done recursively until we have replaced all future/past modalities of interest with witness variables, obtaining $\varphi_{flat}=\psi \wedge T \wedge \wB(\bigvee \Sigma)$, where $T$ is the conjunction of all temporal definitions. Note that the conjunct $\wB(\bigvee \Sigma)$ makes sure that the behaviours in the language of the formulae $\varphi_{flat}$ is restricted to be simple-$(\Sigma, X)$- behaviour, for any $X$ such that $X$ and $\Sigma$ are disjoint set of propositions.

For example, consider the formula $\varphi=a \until_{[0,3]}(c \since(\past_{[0,1]}d))$. 
Replacing the $\since, \past$ modalities with witness propositions 
$w_1$ and $w_2$ we get $\psi=a \until_{[0,3]}w_1$, 
along with the temporal definitions 
$T_1=\wB(w_1 \leftrightarrow ((c \since w_2)))$ and  
$T_2=\wB (w_2 \leftrightarrow 
( \past_{[0,1]}d))$. Hence, $\varphi_{flat}=\psi \wedge T_1 \wedge T_2 \wedge \wB(\bigvee \Sigma)$ is obtained by flattening the $\since,\past$ modalities from $\varphi$.  
Here $W=\{w_1,w_2\}$.  
Given  a timed word $\rho$ over atomic propositions in $\Sigma$ and a formula $\varphi$ built from $\Sigma$, flattening results in a formula $\varphi_{flat}$ built from $\Sigma \cup W$ such that for any word $\rho'\in \tap (\Sigma \cup W)^*$,
$\rho' \models \varphi_{flat}$ iff  $\rho \models \varphi$.   
Hence, we have  $ \varphi_{flat} \downarrow W \equiv_\Sigma \varphi$.

\subsection{Oversampled Projections}
\label{app:oversampling}
An oversampled-$(\Sigma,X, OVS)$-behavior is a timed word $\rho'$ over $X \cup \Sigma$ such that $\bigvee \Sigma$ is true at the first point of $\rho'$. 
Given an oversampled -$(\Sigma,X)$-behaviour $\rho'$, any point $i \in dom(\rho')$ is called an oversampling point or a non-action point with respect to $\Sigma$, if and only if $\rho',i \nvDash \bigvee \Sigma$.
Similarly, all the points $i$ where $\bigvee \Sigma$ is true are called $\Sigma$ action points or just action points.

\noindent{\it \underline{Oversampled Projections}}:      
Given an oversampled-$(\Sigma,X)$-behavior $\rho'=(\sigma',\tau')$, we define the {\it oversampled projection} of $\rho'$ with respect to $X$, denoted $\rho' \Downarrow X$ as the timed word obtained 
by deleting all the oversampling points, and then erasing the symbols of $X$ from the remaining points  (i.e., all points $j$ such that $\sigma'_j \cap \Sigma \neq \emptyset$). 
Note that the result of oversampled projection $\rho$=$\rho'\Downarrow X$ is a timed word over propositions in $\Sigma$. 
\\As an example, let $\Sigma=\{a,b\}, X=\{c,d\}$. Consider, 
\\$\rho' = (\{a\},\underline{(\{c,d\},0.3)}(\{a,b\},0.7)(\{b,d\},1.1)$. $\rho'$ is an oversampled-$(\Sigma,X)$-behaviour where the underlined point is an oversampling point \footnote{For brevity, when we consider a behaviour to be an oversampled $(\Sigma,X)$-behaviour, we will call any point $i$ an oversampling point, iff time point $i$ is an oversampling point with respect to $\Sigma$.}. $\rho = (\{a\},0)(\{a,b\},0.7)(\{b\},1.1)$ is an oversampled projection of $\rho'$. Timed word $\delta = (\{c,d\},0)(\{a\},0.2)(\{a,b\},0.7)(\{b,d\},1.1)$ is not an oversampled-$(\Sigma,X)$-behavior, since at the first point,  no proposition in $\Sigma$ is true. 
\medskip
\\
\noindent{\it \underline{Equisatisfiability modulo Oversampled Projections}}:
Given any logic formulae $\phi$ and $\psi$, let $\phi$ be interpreted over $\tap \Sigma^*$ and $\psi$ over $\tap (\Sigma \cup X)^*$. We say that $\phi$ is equisatisfiable to $\psi$ 
{\it modulo oversampled projections} if and only if  $\Sigma \cap X$ is $\emptyset$,: 
\begin{enumerate}
\label{def}
\item For any $(\Sigma,X)$-oversampled behaviour $\rho'$,  $\rho' \models \psi$ implies that $\rho' \Downarrow X  \models \phi$ 
\item For any timed word $\rho \in \tap \Sigma^*$ such that  $\rho \models \phi$,  there exists $\rho'$ such that $\rho'$ is an oversampled-$(\Sigma,X)$-behaviour, $\rho' \models \psi$ and $\rho'\Downarrow X = \rho$. 
\end{enumerate}
 We denote by $\psi \Downarrow X \equiv_\Sigma \phi$, the fact that $\phi$ is equisatisfiable to $\psi$ modulo  oversampled projections. Similar to proposition \ref{simple-prop-1}, we get the following proposition:
\begin{proposition}
\label{oversampledpp1}
If $\phi$ is interpreted over $\tap \Sigma^*$, $\psi$ over $\tap(\Sigma\cup X)^*$ and $\psi \Downarrow X \equiv_\Sigma \phi$ then $\phi$ is satisfiable if and only if $\psi$ is satisfiable.
\end{proposition}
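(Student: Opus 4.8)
The plan is to prove Proposition~\ref{oversampledpp1} by directly unwinding the definition of equisatisfiability modulo oversampled projections, mirroring the (immediate) argument behind Proposition~\ref{simple-prop-1} for simple projections. Recall that the hypothesis $\psi \Downarrow X \equiv_\Sigma \phi$ packages two implications: clause~(1), that every oversampled-$(\Sigma,X)$-behaviour $\rho'$ with $\rho' \models \psi$ satisfies $\rho' \Downarrow X \models \phi$; and clause~(2), that every $\rho \in \tap\Sigma^*$ with $\rho \models \phi$ lifts to an oversampled-$(\Sigma,X)$-behaviour $\rho'$ over $\Sigma \cup X$ with $\rho' \models \psi$ and $\rho' \Downarrow X = \rho$. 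The biconditional then reduces to threading a witnessing model through the appropriate clause.

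For the forward implication I would assume $\phi$ is satisfiable and fix $\rho \in \tap\Sigma^*$ with $\rho \models \phi$. Clause~(2) supplies an oversampled-$(\Sigma,X)$-behaviour $\rho'$ with $\rho' \models \psi$; since $\rho' \in \tap(\Sigma\cup X)^*$, this $\rho'$ already witnesses that $\psi$ is satisfiable. This half is a pure black-box use of clause~(2) and needs no further work.

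For the backward implication I would assume $\psi$ is satisfiable and fix a model $\rho'$ of $\psi$. Once we know $\rho'$ is an oversampled-$(\Sigma,X)$-behaviour, clause~(1) yields $\rho' \Downarrow X \models \phi$, and because $\rho' \Downarrow X \in \tap\Sigma^*$ this witnesses the satisfiability of $\phi$, closing the biconditional.

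The one place demanding care --- and the step I expect to be the real obstacle --- is precisely the appeal to clause~(1): it is quantified only over oversampled-$(\Sigma,X)$-behaviours, whereas a priori $\psi$ is interpreted over all of $\tap(\Sigma\cup X)^*$ and might have a model whose first point fails $\bigvee\Sigma$. To legitimately invoke clause~(1) I must first guarantee that the chosen model $\rho'$ is an oversampled behaviour, i.e.\ that $\bigvee\Sigma$ holds at its initial point. This is ensured by the standing convention that, throughout the reduction, $\psi$ is interpreted over oversampled-$(\Sigma,X)$-behaviours (equivalently, $\psi$ carries a top-level conjunct forcing $\bigvee\Sigma$ at the first position, which is checked at point~$1$, where satisfaction is evaluated). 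I would make this convention explicit at the outset, after which ``$\psi$ is satisfiable'' means exactly that some oversampled behaviour models $\psi$, and both directions become mechanical.
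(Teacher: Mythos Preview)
Your proof is correct and matches the paper's approach: the paper states this proposition without proof, merely noting it is analogous to Proposition~\ref{simple-prop-1}, and your direct unwinding of the two defining clauses is exactly the intended argument. Your attention to the quantifier restriction in clause~(1) is in fact more careful than the paper itself, which simply glosses over this point.
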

\subsubsection{Relativization: An Example of Oversampled Projections.}

Recall that if formulae $\phi = \fut_{(0,1)}(\fut_{(0,1)} a) \wedge \neg \fut_{(0,1)}$ is interpreted over $\tap \{a\}^*$ then it is unsatisfiable. While the same formulae when interpreted over $\tap (\{a,b\})^*$ has a model $\rho = (a,0), (b,0.5), (a,1.1)$. 

One way to make sure that the satisfaction of a formulae $\phi$ remains insensitive to the set of propositions over which it is interpreted is, by allowing models where all the points satisfy $\bigvee \Sigma$ as we did in \ref{simple-prop-2}. But this is too restrictive for our purposes, as the resulting formulae will be satisfied only by models which are simple-$(\Sigma,X)$ behaviours. Thus, we need a way of transforming a formulae $\phi$ to some $\phi'$ such that $\phi$ and $\phi'$ are equivalent when interpreted over $\tap \Sigma^*$. Moreover, $L(\phi') = \{\rho' | \rho' \Downarrow X \models \psi\}$. 

Note that, given $\Sigma$, for any $X$, all oversampled points in any oversampled-$(\Sigma, X)$-behaviours are characterized by the formula $\neg act$ where 
$act = \bigvee \Sigma$. To realize such a transformation, the trick is to relativize the original formula to points where $act$ is true. That is, we assert inductively, that a subformula is evaluated for its truth only at action points and the oversampling points are neglected. 

\oomit{
Consider a formula $\phi = (a \until b)\wedge \fut_{(1,2)} c$. Now we take the subformula $\fut_{(1,2)}c$ and change it to an equisatisfiable formulae $\fut_{[1,1]} int_1 \wedge \neg \fut_(0,1) int_1 \wedge \neg \fut_{(1,\infty)} int_1 \wedge \fut (int\wedge \fut_{(0,1)} c$. This formulae adds an oversampling point with proposition $int_1$ at $time = 1$ and then it asserts $\fut_{(0,1)} c$ from that point. But notice that this oversampling might not allow the formulae $(a \until b)$ to be true if the first $b$ occurs after 1 time unit. But on the original model(non oversampled model) this subformula may be true. Thus while oversampling we need to assert that the other subformula ignore these extra action points when evaluated for the truth.

Relativization is a technique to ignore oversampling points while evaluating truth of a formula (or a subformula) thus restricting the subformula to be evaluated only at action points.
}
Let $\psi$ be a formula interpreted on timed words over atomic propositions from $\Sigma \cup X$.  
The {\it Relativization} with respect to $\Sigma$ of $\psi$ denoted $Rel(\Sigma,\psi)$ is obtained by replacing recursively 
\begin{itemize}
\item All subformulae 
of the form $\phi_i \until_I \phi_j$  with  \\ $(act \rightarrow Rel(\Sigma,\phi_i)) \until_I (Rel(\Sigma,\phi_j)\wedge act)$,
\item All subformulae of the form  $\phi_i \since_I \phi_j$ with \\  $(act \rightarrow Rel(\Sigma,\phi_i)) \since_I (Rel(\Sigma,\phi_j)\wedge act)$.
\item All subformulae of the form $\Box_I \phi$ with \\ $\Box_I(act \rightarrow Rel(\Sigma, \phi))$, and all subformulae of the form $\fut_I \phi$ with $\fut_I(Rel(\Sigma, \phi) \wedge act)$.
\end{itemize}  
This allows  relativization with respect to the propositions in $\Sigma$.
Let  $\psi{=}\varphi_1 \until_I (\varphi_2 \wedge \Box \varphi_3)$, 
and $\zeta_i{=}Rel(\Sigma,\varphi_i)$ for $i{=}1,2,3$. 
Then $Rel(\Sigma,\psi){=}(act {\rightarrow} \zeta_1) \until_I (act \wedge [\zeta_2 \wedge \Box (act {\rightarrow} \zeta_3)])$.

\end{document}